\title{Linear-Time MaxCut in Multigraphs Parameterized Above the \PT Bound}
\author{Jonas Lill}{Department of Computer Science, ETH Zürich, Switzerland}{jolill@student.ethz.ch}{}{}
\author{Kalina Petrova}{Institute of Science and Technology Austria, Austria\footnote{Parts of this research was conducted while Kalina Petrova was at the Department of Computer Science, ETH~Zürich, Switzerland.}}{Kalina.Petrova@ist.ac.at}{https://orcid.org/0009-0006-1753-6962}{Swiss National Science Foundation, grant no. CRSII5 173721. This project has received funding from the European Union’s Horizon 2020 research and innovation
programme under the Marie Skłodowska-Curie grant agreement No 101034413.}
\author{Simon Weber}{Department of Computer Science, ETH Zürich, Switzerland}{simon.weber@inf.ethz.ch}{https://orcid.org/0000-0003-1901-3621}{Swiss National Science Foundation under project no. 204320.}
\authorrunning{J. Lill, K. Petrova and S. Weber} %
\keywords{Fixed-parameter tractability, maximum cut, \EE bound, \PT bound, multigraphs, integer-weighted graphs} %
\crefname{claim}{claim}{claims}
\crefname{observation}{observation}{observations}
\newcommand{\erdos}{Erdős\xspace}
\newcommand{\EE}{Edwards-\erdos}
\newcommand{\turzik}{Turz\'{i}k\xspace}
\newcommand{\PT}{Poljak-\turzik}
\newcommand{\maxcutproblem}{\textsc{MaxCut}\xspace}
\newcommand{\maxcutwithweightsproblem}{\textsc{MaxCut-With-Vertex-Weights}\xspace}
\newcommand{\mac}{\ensuremath{\mu}}
\newcommand{\mst}{\ensuremath{w_{MSF}}}
\newcommand{\f}[1]{\relax\ifmmode#1\else{$#1$}\fi}
\newcommand{\R}{\mathbb{R}\xspace}
\newcommand{\N}{\mathbb{N}\xspace}
\newcommand{\NP}{\ensuremath{\mathsf{NP}}}
\newcommand{\krulehead}{{\bf Rule \arabic{krule}}}
\crefname{krule}{rule}{rules}
\Crefname{krule}{Rule}{Rules}
\begin{document}

\maketitle

\begin{abstract}
\maxcutproblem is a classical \NP-complete problem and a crucial building block in many combinatorial algorithms. The famous \emph{\EE bound} states that any connected graph on $n$ vertices with $m$ edges contains a cut of size at least $\frac{m}{2}+\frac{n-1}{4}$. Crowston, Jones and Mnich [Algorithmica, 2015] showed that the \maxcutproblem problem on simple connected graphs admits an FPT algorithm, where the parameter $k$ is the difference between the desired cut size $c$ and the lower bound given by the \EE bound. This was later improved by Etscheid and Mnich [Algorithmica, 2017] to run in parameterized linear time, i.e., $f(k)\cdot O(m)$. We improve upon this result in two ways: Firstly, we extend the algorithm to work also for \emph{multigraphs} (alternatively, graphs with positive integer weights). Secondly, we change the parameter; instead of the difference to the \EE bound, we use the difference to the \emph{\PT bound}. The \PT bound states that any weighted graph $G$ has a cut of size at least $\frac{w(G)}{2}+\frac{\mst(G)}{4}$, where $w(G)$ denotes the total weight of $G$, and $\mst(G)$ denotes the weight of its minimum spanning forest. In connected simple graphs the two bounds are equivalent, but for multigraphs the \PT bound can be larger and thus yield a smaller parameter $k$. Our algorithm also runs in parameterized linear time, i.e., $f(k)\cdot O(m+n)$.
\end{abstract}

\newpage
\section{Introduction}
The $\maxcutproblem(G,c)$ problem is the problem of deciding whether a given graph $G$ contains a cut of size at least $c$. It has been known for a very long time that this problem is \NP-complete, in fact it was one of Karp's 21 \NP-complete problems~\cite{Karp1972}. The \maxcutproblem problem has been intensely studied from various angles such as random graph theory and combinatorics, but also approximation and parameterized complexity. It has numerous applications in areas such as physics and circuit design; for more background on the \maxcutproblem problem we refer to the excellent survey~\cite{maxcutsurvey}.

There are many lower bounds on the maximum cut size $\mac(G)$ of a given graph $G$. If $G$ is a graph with $m$ edges, a trivial lower bound is $\mac(G)\geq \frac{m}{2}$. 
This can be shown easily using the probabilistic method, as first done by \erdos~\cite{Erdös1965}. Clearly, $\maxcutproblem(G,c)$ is thus easily solvable if $c\leq\frac{m}{2}$. But what if $c$ is larger? At which point does the \maxcutproblem problem become difficult? It turns out that already $c=\frac{m}{2}+\epsilon m$ for any fixed $\epsilon>0$ makes the problem \NP-hard~\cite{NPhardmhalfpluseps}. However, as long as the difference $c-\frac{m}{2}$ is just a constant, $\maxcutproblem(G,c)$ is still polynomial-time solvable: Mahajan, Raman and Sikdar showed in 2009~\cite{firstFPT} that $\maxcutproblem(G,\frac{m}{2}+k)$ is fixed-parameter tractable (FPT), i.e., it can be solved in time $f(k)\cdot n^{O(1)}$. This started off the study of \emph{parameterized algorithms above guaranteed lower bounds}.

By the time this FPT algorithm was found, $\frac{m}{2}$ was no longer the best-known lower bound for $\mac(G)$. Already more than 40 years earlier, Edwards showed the following lower bound that was previously conjectured by \erdos, and is thus now known as the \EE bound.
\begin{theorem}[\EE bound~\cite{edwards1973some,edwards1975improved}]\label{lem:EEbound}
    For any connected simple graph $G$ with $n$ vertices and $m$ edges, $\mac(G)\geq \frac{m}{2}+\frac{n-1}{4}$.
\end{theorem}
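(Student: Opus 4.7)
My plan is to prove the \EE bound by induction on the number of vertices $n$. The base case $n = 1$ is immediate, since $m = 0$ and both sides of the inequality vanish. I would also verify the bound by hand on the triangle $K_3$, since that graph attains equality and tends to be the tightest small obstruction for this kind of argument.

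For the inductive step on a connected $G$ with $n \geq 2$, the main move is to pick a non-cut vertex $v$ of $G$ (one always exists, e.g., any vertex in a leaf block of the block-cut tree), apply the inductive hypothesis to $G - v$, and then reinsert $v$ greedily on the side of the maximum cut of $G - v$ that covers more of its incident edges. Since $v$ has $\deg(v)$ neighbors in $G - v$ split as $(a, b)$ across the two sides of that cut, greedy reinsertion gains $\max(a, b) \geq \lceil \deg(v)/2 \rceil$ cut edges, yielding
\[
\mu(G) \geq \mu(G-v) + \lceil \deg(v)/2 \rceil \geq \tfrac{m}{2} + \tfrac{n-2}{4} + \bigl(\lceil \deg(v)/2 \rceil - \deg(v)/2\bigr).
\]
The last term equals $1/2$ when $\deg(v)$ is odd and $0$ otherwise, so the induction closes with $1/4$ to spare whenever $G$ has a non-cut vertex of odd degree.

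The main obstacle is the remaining case, in which every non-cut vertex of $G$ has even degree; then the single-vertex reduction falls short of the target by exactly $1/4$. I would handle this by a joint reduction on a constant-size local substructure whose combined greedy reinsertion accumulates at least $1/4$ more than the mass of edges removed. Natural candidates are leaves (which trivially have odd degree and are already handled by the generic step), vertices of degree $2$ lying on a triangle, two adjacent non-cut vertices, or a short path of low-degree vertices whose joint removal admits a structured reinsertion that accumulates slack across several vertices. Proving that some such substructure must always be present whenever no odd-degree non-cut vertex exists, and checking that its removal and reinsertion produces the missing $1/4$, is the technical heart of the argument; this structural case analysis is essentially what Poljak and \turzik later generalized to obtain the weighted bound that motivates the paper.
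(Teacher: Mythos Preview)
The paper does not give a proof of this theorem; it is stated as the classical \EE bound with citations to Edwards' original papers and serves only as background for the \PT bound (which the paper likewise cites without proof). There is therefore no in-paper argument to compare your proposal against.

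Judged on its own merits, your outline is standard and the odd-degree case is handled correctly, but the proposal is not a proof: you explicitly leave the critical case---every non-cut vertex of even degree---unresolved, listing ``natural candidates'' and declaring that establishing one of them ``is the technical heart of the argument.'' That is precisely the gap. For the record, your two-adjacent-vertex candidate does close the induction cleanly once you have it: if $u,v$ are adjacent and $G-\{u,v\}$ is connected, then placing $u$ and $v$ on opposite sides of an optimal cut of $G-\{u,v\}$ and taking the better of the two orientations cuts at least $(\deg u+\deg v)/2$ of the removed edges, and combining with the inductive hypothesis on $G-\{u,v\}$ yields exactly $\frac{m}{2}+\frac{n-1}{4}$. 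What you are missing is the structural lemma guaranteeing such a pair exists whenever the odd-degree case fails (e.g., by looking inside a leaf block of size $\geq 3$, which must be $2$-connected, and extracting a suitable edge avoiding the cut vertex). This is not deep, but it is exactly the work you deferred, and it is of the same flavor as the structural case analysis the paper itself carries out in \Cref{lemma:structural_properties} for its reduction rules.
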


Unlike the previous bound of $\frac{m}{2}$, this bound is tight for an infinite class of graphs, for example the odd cliques. It remained open for quite a while whether $\maxcutproblem(G,\frac{m}{2}+\frac{n-1}{4}+k)$ would also be fixed-parameter tractable, i.e., whether the parameter $k$ could be reduced by $\frac{n-1}{4}$ compared to the previous result by Mahajan et al. This question was answered in the positive by Crowston, Jones and Mnich, who proved the following theorem.

\begin{theorem}[Crowston, Jones, Mnich {\cite[Thm. 1]{blackboxFPT}}]\label{thm:fptKnown}
    There is an algorithm that computes, for any connected graph $G$ with $n$ vertices and $m$ edges and any integer $k$, in time $2^{O(k)}\cdot n^4$ a cut of $G$ of size at least $\frac{m}{2}+\frac{n-1}{4}+k$, or decides that no such cut exists.
\end{theorem}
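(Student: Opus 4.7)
The plan is to prove the theorem by \emph{above-guarantee kernelization}: design polynomial-time reduction rules that shrink an instance $(G,k)$ down to an equivalent \emph{kernel} whose size depends only on $k$, then brute-force the kernel. Concretely, I define the \emph{excess} $\epsilon(G) := 4\mac(G) - 2m - (n-1)$, so that \Cref{lem:EEbound} gives $\epsilon(G)\geq 0$ and the task reduces to finding a cut witnessing $\epsilon(G)\geq 4k$, or deciding none exists. Each reduction rule must either (i) transform $(G,k)$ into a smaller equivalent instance $(G',k')$ with $k'\leq k$, or (ii) directly certify a cut achieving $\epsilon\geq 4k$.

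First I would design a suite of structural reductions. The cleanest is a \emph{bridge rule}: if $uv$ is a bridge splitting $G$ into components $G_1\ni u$ and $G_2\ni v$, then flipping the side of $G_2$ in any cut yields an at-least-as-good cut that cuts $uv$, so $\mac(G)=\mac(G_1)+\mac(G_2)+1$ and the \EE bounds combine cleanly; this splits the instance in two. A \emph{degree-two smoothing rule} suppresses a vertex of degree $2$ on a path, transferring a bounded amount of excess. More subtle rules handle short obstructions such as triangles with a common neighbor, twin vertices, or small subgraphs of ``forced'' parity, each firing whenever a specific local pattern would otherwise block the greedy \EE argument from being tight. Crucially, each rule either shrinks $n+m$, decreases $k$, or outputs a solution outright.

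Next I would establish the \emph{kernel bound}. Assume $G$ is irreducible, i.e., no rule applies. The \EE proof proceeds by ordering vertices $v_1,\ldots,v_n$ along a DFS/BFS spanning tree and greedily placing each $v_i$ on the side maximizing the number of already-placed neighbors it cuts; this contributes at least $\tfrac14$ per tree edge. The key structural claim is that in an irreducible graph, any slack in this greedy argument at a vertex $v_i$ translates into a local configuration that some rule should have fired on. Hence irreducibility forces almost every greedy step to be tight in a very restrictive way, which pins $n$ to at most a function $f(k)$ of the parameter. On the resulting kernel of size $O(f(k))$ we enumerate all $2^{f(k)}$ bipartitions and select the best.

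The main obstacle is precisely the structural lemma bounding $|V(G)|$ in irreducible graphs. Each candidate reduction must simultaneously (a) preserve the decision, (b) not increase the parameter, and (c) not introduce a pattern that some other rule should already have eliminated; the catalogue must also be rich enough that every graph with $n > f(k)$ contains a pattern triggering at least one rule. Integrality and parity issues stemming from the $\tfrac{n-1}{4}$ term make the accounting delicate, and maintaining connectivity through the reductions (so that the \EE bound keeps applying to intermediate graphs) requires care as well. Once the kernel bound is in hand, applying the polynomial-time reductions exhaustively and enumerating cuts on the kernel yields a total running time of $2^{O(k)}\cdot n^{O(1)}$, matching \Cref{thm:fptKnown}.
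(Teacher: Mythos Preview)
First, note that this theorem is not proved in the present paper: it is quoted from \cite{blackboxFPT} as background. The paper does, however, describe the method of \cite{blackboxFPT} in \Cref{sec:overview} and reuses the same template for its own \Cref{thm:main}, so one can compare your plan against that.

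Your plan and the actual method diverge at the central step. You aim for a true vertex kernel: reduce until the whole graph has at most $f(k)$ vertices, then enumerate all $2^{f(k)}$ cuts. The algorithm of \cite{blackboxFPT} (and of this paper) does \emph{not} bound $|V(G)|$ by a function of $k$. Instead, the reduction rules either certify a large cut directly or accumulate a set $S$ with $|S|=O(k)$ such that $G-S$ is a \emph{clique-forest} (every block is a clique; in this paper, a uniform-clique-forest). The remainder $G-S$ may still have $\Theta(n)$ vertices. One then loops over the $2^{|S|}$ bipartitions of $S$ and, for each, solves \maxcutwithweightsproblem on the clique-forest $G-S$ in polynomial time via the block-cut tree. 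The $2^{O(k)}$ factor comes from enumerating partitions of $S$, not from brute-forcing a small kernel.

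The gap in your proposal is exactly the step you yourself label ``the main obstacle'': the claim that irreducibility under your rules forces $n\le f(k)$. With rules of the type you list (bridges, degree-two smoothing, small local patterns), this fails. Clique-forests are the canonical obstruction: a long chain of odd cliques glued at cut vertices is tight for the \EE bound (excess zero), contains no bridges, and admits no bounded-size local pattern whose removal is safe, yet has unbounded $n$. The method in \cite{blackboxFPT} succeeds precisely because it does not try to destroy such structures; it recognises them as the tractable residue once $O(k)$ ``bad'' vertices have been peeled off into $S$. Your sketch is missing this modulator-to-a-tractable-class idea, and without it the kernel bound does not go through.
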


This algorithm has later been improved to run in linear time (in terms of $m$) by Etscheid and Mnich~\cite{blackboxFPTlinear}. However, this improvement only holds for deciding the existence of such a cut, and not for computing a cut if one exists.

We would like to highlight another classic lower bound on the size of the maximum cut of a graph, nicknamed the ``spanning tree'' bound: Any connected graph on $n$ vertices has a cut of size at least $n-1$, since it contains a spanning tree of this size and trees are bipartite. Note that this bound is incomparable to the \EE bound. In 2018, Madathil, Saurabh, and Zehavi~\cite{aboveSpanningTree} showed that $\maxcutproblem(G,n-1+k)$ is also fixed-parameter tractable.

In 1986, Poljak and \turzik improved upon the \EE bound by replacing the term $n-1$ with the size of the minimum spanning tree (or forest in disconnected graphs), thus obtaining the following lower bound for maximum cuts in weighted graphs.
\begin{theorem}[\PT bound \cite{ptBound}]\label{lem:PTbound}
    For any graph $G=(V,E)$ with weight function $w:E\rightarrow\R_{>0}$, we have $\mac(G)\geq \frac{w(G)}{2}+\frac{\mst(G)}{4}$, where $w(G)=\sum_{e\in E}w(e)$ and $\mst(G)$ denotes the weight of a minimum-weight spanning forest of $G$.
\end{theorem}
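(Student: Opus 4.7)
The plan is to establish the \PT bound by induction on $\abs{V}$ together with a greedy algorithm that constructs a cut by processing vertices in an order compatible with a minimum spanning forest of $G$.

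First, we may assume $G$ is connected: otherwise apply the bound to each connected component $G_i$ and add the inequalities, using $w(G)=\sum_i w(G_i)$ and $\mst(G)=\sum_i \mst(G_i)$. Fix a minimum spanning tree $F$ of $G$, rooted at an arbitrary vertex $v_1$. A BFS/DFS traversal of $F$ yields an ordering $v_1,\dots,v_n$ in which every $v_i$ with $i\geq 2$ has exactly one tree-edge predecessor, its parent $p(v_i)\in\{v_1,\dots,v_{i-1}\}$. I build a bipartition $(A,B)$ by inserting one vertex at a time, placing $v_i$ on whichever side maximises the total weight of its incident cut edges into the already-placed set.

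For the analysis, let $t_i$ be the weight of the tree edge from $v_i$ to $p(v_i)$, let $x_i$ be the total weight of the remaining (non-tree) edges from $v_i$ back to the side currently containing $p(v_i)$, and let $y_i$ be the total weight of edges from $v_i$ to the opposite side. The greedy gain in step $i$ is $\max(t_i+x_i,\,y_i)$, and using the identity $\max(u,v)=\tfrac{u+v}{2}+\tfrac{\abs{u-v}}{2}$ the resulting cut satisfies
\[
\mac(G)\;\geq\;\sum_{i=2}^n \max(t_i+x_i,\,y_i)\;=\;\frac{w(G)}{2}+\frac{1}{2}\sum_{i=2}^n \abs{t_i+x_i-y_i}.
\]
It therefore suffices to prove $\sum_{i=2}^n \abs{t_i+x_i-y_i}\;\geq\;\tfrac{1}{2}\sum_{i=2}^n t_i\;=\;\mst(G)/2$.

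The main obstacle is that this inequality fails term-by-term: a routine calculation shows $\abs{t_i+x_i-y_i} < t_i/2$ precisely when $y_i\in(x_i+t_i/2,\,x_i+3t_i/2)$, so a single bad step can already fall short of its $t_i/4$ share. My plan to close the gap is an amortisation: modify the greedy to break ties in favour of cutting the tree edge, and maintain a potential that pushes any unpaid $t_i/2$ credit down the subtree of $F$ rooted at $v_i$. The minimum-spanning-tree property is what makes this viable: by the cycle property, every non-tree back edge is at least as heavy as the heaviest tree edge on its fundamental cycle, which forces the ``bad'' back edges responsible for a local deficit to create compensating $\abs{t_j+x_j-y_j}$ surpluses further along the tree. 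Setting up and closing this potential-based accounting, rather than the greedy construction itself, is where I expect the technical work to lie.
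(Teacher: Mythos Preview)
The paper does not prove \Cref{lem:PTbound}; it is quoted from Poljak and \turzik's original work \cite{ptBound} and used as a black box throughout. There is thus no proof in the paper to compare against.

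As for your proposal itself: it is explicitly a plan rather than a proof, and the gap sits exactly where you say it does. You correctly reduce to showing $\sum_{i\ge 2}\abs{t_i+x_i-y_i}\ge \tfrac12\sum_{i\ge 2}t_i$ and correctly note this fails termwise. But the amortisation you sketch faces a real obstacle you have not addressed: the quantities $x_i,y_i$ depend on every earlier greedy decision, so a deficit at step $i$ is not tied to any fixed combinatorial object that can be ``pushed down the subtree''. The cycle property tells you a non-tree edge $e$ is heavy relative to the tree edges on its fundamental cycle, but $e$ is consumed at a single step (the step of its later endpoint) and never reappears; its only influence on later steps is indirect, through the placement it forces on that endpoint and hence through which side \emph{other} non-tree edges fall on for descendants. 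Your sketch gives no argument for why this secondary effect has the right sign or magnitude. Worse, because a BFS/DFS ordering of a minimum spanning tree (unlike a DFS tree of $G$ itself) permits cross edges between unrelated subtrees, it is not even clear that the greedy you describe always attains the \PT bound; if it does not, no bookkeeping can rescue the approach.

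For contrast, the original Poljak--\turzik argument does not use a vertex ordering or greedy placement. It is a structural induction on $\abs{V}$: one shows that every connected graph contains a vertex or an adjacent pair whose removal keeps the graph connected and for which the drop in $\tfrac{w(G)}{2}+\tfrac{\mst(G)}{4}$ is matched by a one-line extension of a good cut of the smaller graph. The work is in proving such a removable configuration always exists; once found, each step is a direct calculation with no deferred accounting. (The paper's \Cref{fact:neighboringedges} in the appendix is a local instance of exactly this kind of step.)
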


It is easy to see that \Cref{lem:PTbound} implies the bound in \Cref{lem:EEbound} both for (unweighted) simple graphs and multigraphs. In unweighted simple graphs it is actually equivalent to \Cref{lem:EEbound}, while on multigraphs and positive integer-weighted graphs it can be strictly larger.

The authors of \Cref{thm:fptKnown} thus posed as their major open question whether their algorithm could be extended to solve $\maxcutproblem(G,\frac{m}{2}+\frac{n-1}{4}+k)$ on multigraphs as well. We answer this question in the positive, and improve the result further by replacing the \EE bound with the \PT bound.

\subsection{Results}
We provide a parameterized linear time algorithm for deciding \maxcutproblem in multigraphs and positive integer-weighted (simple) graphs above the \PT bound. A multigraph can be easily turned into a positive integer-weighted graph and vice versa; in the rest of this paper we phrase all of our results and proofs in terms of positive integer-weighted graphs for better legibility.
\begin{theorem}\label{thm:main}
    There is an algorithm that decides for any graph $G=(V,E)$ with weight function $w:E\rightarrow \N$ and any integer $k$, in time $2^{O(k)}\cdot O(|E|+|V|)$, whether a cut of $G$ of size at least $\frac{w(G)}{2}+\frac{\mst(G)}{4}+k$ exists.
\end{theorem}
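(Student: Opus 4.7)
The plan is to adapt the reduction-and-kernelize recipe of Crowston--Jones--Mnich and Etscheid--Mnich~\cite{blackboxFPT,blackboxFPTlinear} to the integer-weighted setting and the stronger \PT bound. Concretely, I would define the potential $\Phi(G,k) = \frac{w(G)}{2} + \frac{\mst(G)}{4} + k$ so that the task is to decide $\mac(G) \geq \Phi(G,k)$, and build a finite library of reduction rules, each producing a simpler instance $(G',k')$ with $k' \leq k$ and with the same answer, up to a recorded shift from edges forced into or out of the cut. After exhaustively applying the rules I aim to obtain a kernel whose size is bounded by a function of $k$, which can then be solved by brute-force enumeration over the $2^{O(k)}$ cuts. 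Summing costs over the rule-application history should yield the claimed $2^{O(k)} \cdot O(|E|+|V|)$ bound.

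I would port the simple-graph rules of~\cite{blackboxFPT,blackboxFPTlinear} and augment them by rules specific to the weighted regime. The most important new rule is a \emph{heavy-edge rule}: if an edge $uv$ has weight at least some threshold $C(k)$, any cut that does not cross $uv$ falls too far below $\Phi(G,k)$ to be a witness, so we may assume $u$ and $v$ lie on opposite sides and resolve the edge locally, updating $w$, $\mst(G)$ and the recorded shift accordingly. I would complement this with \emph{MSF-aware versions} of the structural rules around small vertex separators and low-degree vertices from the simple-graph case; these need to be reproved because in a weighted graph, contracting or deleting an edge changes $\mst(G)$ globally rather than by exactly $1$. Once no rule applies, every remaining edge has weight bounded in $k$, and the underlying simple structure is the kernel of the \EE-bound algorithm, so the total number of vertices and edges is bounded by a function of $k$.

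The principal obstacle is managing the $\mst(G)$ term, which is genuinely nonlocal: contracting an edge can promote previously redundant edges into the MSF or reshape how the MSF spans a newly merged component. Each rule must therefore be analysed so that (i) $\mac(G) - \Phi(G,k)$ changes consistently across the reduction, (ii) the parameter $k$ never increases, and (iii) both the rule application and the incremental MSF update cost time proportional to the local modification of $G$, which I expect to achieve with an offline MSF-maintenance procedure or link-cut trees. A secondary subtlety is that the linear-time algorithm of~\cite{blackboxFPTlinear} only decides existence rather than returning a cut; this is consistent with the statement of \Cref{thm:main}, which asks only for the decision version.
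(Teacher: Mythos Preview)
Your plan has a structural mismatch with what the paper actually does, and the specific ``heavy-edge rule'' you propose is not sound.

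First, the heavy-edge rule. You claim that if $w(uv)\geq C(k)$ for some threshold depending only on $k$, then every witnessing cut must separate $u$ and $v$. This is false: take a triangle whose three edges all have weight $W$. Here $\Phi(G,0)=\frac{3W}{2}+\frac{2W}{4}=2W$ equals the maximum cut, so the instance is a yes-instance for $k=0$, yet every optimum cut leaves one heavy edge uncut regardless of how large $W$ is. In general, a cut avoiding $uv$ can have value up to $w(G)-w(uv)$, and forcing this below $\frac{w(G)}{2}+\frac{\mst(G)}{4}+k$ requires $w(uv)>\frac{w(G)}{2}-\frac{\mst(G)}{4}-k$, a threshold that scales with $w(G)$, not with $k$. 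So you will not be able to bound the remaining edge weights by a function of $k$ this way.

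Second, and more fundamentally, the paper does \emph{not} produce a kernel of size bounded in $k$; it explicitly lists kernelization for integer-weighted graphs as an open problem. The reduction rules are only one-directional (a large cut in $G'$ lifts to a large cut in $G$) and their purpose is not to shrink the instance. Instead, while applying them one \emph{marks} at most $12k$ vertices into a set $S$ with the guarantee that $G-S$ is a \emph{uniform-clique-forest}: every block is a clique with all edges of equal weight. The graph $G-S$ may be arbitrarily large, but \maxcutwithweightsproblem on a uniform-clique-forest is solvable in linear time by peeling leaf-blocks. One then enumerates the $2^{|S|}\leq 2^{12k}$ bipartitions of $S$, encodes each as vertex weights on $G-S$, and solves the resulting instance exactly; the $2^{O(k)}$ factor comes from this enumeration over $S$, not from brute-forcing a small kernel. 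The missing piece in your outline is precisely this structural theorem about $G-S$ together with the linear-time subroutine that exploits it; your fallback to a size-$f(k)$ kernel is not known to exist.
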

Using the same techniques we can also get a parameterized quadratic-time algorithm to compute such a cut, if one exists.
\begin{theorem}\label{thm:finding}
    There is an algorithm that computes for any graph $G=(V,E)$ with weight function $w:E\rightarrow \N$ and any integer $k$, in time $2^{O(k)}\cdot O(|E|\cdot|V|)$, a cut of $G$ of size at least $\frac{w(G)}{2}+\frac{\mst(G)}{4}+k$, if one exists.
\end{theorem}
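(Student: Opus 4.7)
The plan is to bootstrap the decision algorithm of \Cref{thm:main} into a search algorithm via self-reduction, making one decision call per vertex whose side is to be determined.

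I first call the decision algorithm of \Cref{thm:main} on $(G, k)$; if the answer is no, we return that no cut of the required size exists. Otherwise, we construct the partition $(A,B)$ incrementally. Pick an arbitrary vertex $v_1$ and place it in $A$. For each remaining vertex $v_i$ in turn, form a reduced instance $(G_i, k_i)$ such that $G_i$ has a cut of size at least $\frac{w(G_i)}{2} + \frac{\mst(G_i)}{4} + k_i$ if and only if $G$ has a cut of size at least $\frac{w(G)}{2} + \frac{\mst(G)}{4} + k$ in which $A\cup\{v_i\}$ lies entirely on one side and $B$ entirely on the other. Call the decision algorithm on $(G_i, k_i)$; if the answer is yes commit $v_i$ to $A$, otherwise commit it to $B$. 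Because the first call certified that some valid cut exists, at each step at least one of the two choices is consistent with such a cut, so the procedure terminates with a valid partition $(A,B)$ after $|V|-1$ iterations.

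The reduced instance $G_i$ is obtained by contracting the vertices already assigned to each side into a single super-vertex, so that edges within a contracted side become self-loops and contribute nothing to any cut. To force the two super-vertices $s_A, s_B$ onto opposite sides of any near-optimal cut (which is needed once $B \neq \emptyset$), we add a single auxiliary edge between $s_A$ and $s_B$ of weight heavier than all remaining edges combined. The adjusted parameter $k_i$ is defined by translating the original threshold against the \PT bound of $G_i$; the main technical obstacle is proving that $k_i = O(k)$ uniformly over all iterations despite these modifications, since both $w$ and $\mst$ shift under the contractions and the heavy auxiliary edge, but the resulting shift in $\frac{w}{2}+\frac{\mst}{4}$ is essentially compensated by the corresponding shift in the cut target (the heavy edge is always cut, and the self-loops are never cut). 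Once $k_i = O(k)$ is established, each call to the decision algorithm runs in time $2^{O(k)}\cdot O(|E|+|V|)$ by \Cref{thm:main}, and constructing $G_i$ from scratch takes linear time as well. Summing over the $|V|-1$ iterations yields the claimed running time of $2^{O(k)}\cdot O(|E|\cdot|V|)$, and by construction the output $(A,B)$ is a cut of $G$ of size at least $\frac{w(G)}{2}+\frac{\mst(G)}{4}+k$.
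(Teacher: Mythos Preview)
Your self-reduction approach has a genuine gap: the claim that $k_i = O(k)$ uniformly over all iterations is false, and in fact $k_i$ can blow up to $\Theta(w(G))$. The intuition that ``the resulting shift in $\frac{w}{2}+\frac{\mst}{4}$ is essentially compensated by the corresponding shift in the cut target'' is wrong. When you contract $A$ to a super-vertex, the edges inside $A$ disappear, so $w(G_i) = w(G) - w(G[A]) - w(G[B]) + W$, and hence the \PT bound of $G_i$ drops by roughly $\frac{w(G[A])+w(G[B])}{2}$. But the cut target you are chasing is still the original $c$ (plus $W$, which cancels against the $W/2$ and the $W$ that may or may not sit in $\mst(G_i)$). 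Concretely, with $B=\emptyset$ one gets
\[
k_i \;=\; c \;-\; \Bigl(\tfrac{w(G_i)}{2}+\tfrac{\mst(G_i)}{4}\Bigr)
\;=\; k \;+\; \tfrac{w(G[A])}{2} \;+\; \tfrac{\mst(G)-\mst(G_i)}{4},
\]
and the term $\tfrac{w(G[A])}{2}$ is not bounded by any function of $k$. For a unit-weight clique $K_n$ with $k=0$, after you have committed $n/4$ vertices to $A$ you already have $w(G[A]) = \Theta(n^2)$ while $\mst(G)=\mst(G_i)=n-1$, so $k_i=\Theta(n^2)$ and the call to \Cref{thm:main} is no longer polynomial.

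The paper avoids self-reduction entirely. It exploits that the soundness proof of the reduction rules (\Cref{lem:rulesaresound}) is \emph{constructive}: from a good cut of the reduced graph one can rebuild a good cut of the original in linear time. So the algorithm simply runs the reduction rules all the way down to an edgeless graph (at most $O(n)$ applications, hence $O(n\cdot m)$ time). If the parameter has dropped to $k\le 0$ at that point, any cut of the empty graph works and is lifted back through the rule applications. If $k>0$ remains, then $|S|\le 12k$, and the $2^{|S|}$ instances of \maxcutwithweightsproblem on $G-S$ already output an explicit maximum cut (not just a decision), which is then combined with the guessed partition of $S$.
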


We would like to point out that \Cref{thm:main} is a strict improvement on the linear-time algorithm from \cite{blackboxFPTlinear} in two ways: Firstly we increase the types of graphs the algorithm is applicable to, and secondly we also strictly decrease the parameter for some instances. The following observation shows that this decrease of parameter can be significant.
\begin{observation}
    There exist sequences of positive-integer-weighted graphs $(G_i)_{i\in \N}$ and integers $(c_i)_{i\in \N}$ such that \Cref{thm:main} yields a polynomial-time algorithm to solve $\maxcutproblem(G_i,c_i)$, but when replacing $\mst(G)$ by $n-1$, it does not.
\end{observation}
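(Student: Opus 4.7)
The plan is to construct an explicit family of graphs witnessing the separation. The key idea is to look for a graph on which the \PT bound is \emph{exactly tight} while its minimum spanning forest is much heavier than $n-1$, so that the bound $\frac{w(G)}{2}+\frac{n-1}{4}$ lags significantly behind. A natural candidate is obtained by taking a tight example of the \PT bound, namely an odd complete graph, and scaling all edge weights uniformly.

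Concretely, I would let $G_i$ be the triangle $K_3$ in which every edge has weight $i$, and set $c_i = 2i$. The first step is to verify that $w(G_i) = 3i$, $\mst(G_i) = 2i$ (any two of the three edges), and $\mac(G_i) = 2i$ (achieved by isolating any one vertex). Substituting into \Cref{lem:PTbound} yields
\[
\frac{w(G_i)}{2} + \frac{\mst(G_i)}{4} = \frac{3i}{2} + \frac{2i}{4} = 2i = c_i,
\]
so the parameter in \Cref{thm:main} is $k=0$, and the algorithm decides $\maxcutproblem(G_i,c_i)$ in time $O(|E|+|V|) = O(1)$, which is polynomial in the input bit-length $\Theta(\log i)$.

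Replacing $\mst(G_i)$ by $n-1 = 2$ yields the smaller bound $\frac{3i}{2} + \frac{1}{2}$ and thus the modified parameter $k'_i = c_i - \frac{3i}{2} - \frac{1}{2} = \frac{i-1}{2}$. Restricting to odd $i \geq 3$ makes this a positive integer growing linearly in $i$, so any runtime of the form $2^{\Omega(k'_i)} = 2^{\Omega(i)}$ is super-polynomial in the input bit-length $\Theta(\log i)$. No genuine obstacle arises; the only mild subtlety is picking $i$ odd to ensure $k'_i$ is a positive integer, and choosing $c_i$ to match the \PT bound exactly so that $k=0$ in \Cref{thm:main}.
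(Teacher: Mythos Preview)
Your proof is correct but uses a genuinely different construction from the paper's. The paper takes $G_i$ to be a tree on $i+1$ vertices with every edge of weight~$2$, so that $\mst(G_i)=2i$ while $n-1=i$; setting $c_i=\tfrac{6}{4}i+k$ for a fixed constant $k$ then makes the \PT parameter constant but the \EE parameter $\Theta(i)$, against an input of size $\Theta(i)$. You instead keep the underlying graph fixed as a triangle and scale the edge weights, so that $\mst(G_i)=2i$ while $n-1=2$ is constant.

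Your approach gives a crisper separation (the \PT parameter is exactly $0$, and the gap to the \EE parameter grows linearly), and it neatly exploits that the \PT bound is tight on uniformly weighted odd cliques. The paper's construction, on the other hand, keeps all edge weights bounded, so ``polynomial time'' can be read directly as polynomial in $|V|+|E|$, which matches the paper's runtime model $f(k)\cdot O(|E|+|V|)$. In your example $|V|+|E|$ is constant and the separation is in terms of the input bit-length $\Theta(\log i)$; this is perfectly valid, but it does require the reader to interpret ``polynomial'' in that sense. One small notational quibble: you write $2^{\Omega(k'_i)}$, but the (modified) theorem only supplies an \emph{upper} bound $2^{O(k'_i)}$; what you mean is that this bound is of order $2^{\Theta(i)}$ and therefore not polynomial in $\log i$, which is indeed what the observation asserts.
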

\begin{proof}
    Let $G_i$ be a tree on $i+1$ vertices where each edge has weight $2$. Then, the \PT bound yields $\mac(G_i)\geq \frac{2i}{2}+\frac{2i}{4}=\frac{6}{4}i$, while the \EE bound only yields $\mac(G_i)\geq \frac{2i}{2}+\frac{i+1-1}{4}=\frac{5}{4}i$. Thus, if we set $c_i=\frac{6}{4}i+k$ for some constant $k$, then \Cref{thm:main} yields a $2^{O(k)}\cdot poly(i)=poly(i)$ algorithm for $\maxcutproblem(G_i,c_i)$, while with the \EE bound it would yield a $2^{O(k+\frac{1}{4}i)}\cdot poly(i)$ algorithm, which is not polynomial. 
\end{proof}

\subsection{Algorithm Overview}\label{sec:overview}
Our algorithm works in a very similar fashion to the one in \cite{blackboxFPT}. We use a series of reduction rules that can reduce the input graph down to a graph with no edges. While performing this reduction, we either prove that $G$ has a cut of the desired size, or we collect a set $S$ of $O(k)$ vertices such that $G-S$ is a uniform-clique-forest, i.e., a graph in which every biconnected component is a clique in which every edge has the same weight. Given such a set $S$, we can then compute the maximum cut of $G$ exactly: We iteratively test all possibilities of partitioning the vertices in $S$ between the two sides of the cut, and then compute the maximum cut of $G$ assuming that the vertices of $S$ are indeed partitioned like this. To do this, we use a similar approach as in \cite{blackboxFPT}: We compute the maximum cut of $G-S$ with \emph{weighted vertices}. In this setting, each vertex $v$ in $G-S$ specifies a weight $w_0(v)$ and $w_1(v)$ for both possible sides of the cut $v$ may land in. The value of a cut is given by the total weight of the cut edges plus the sum of the correct weight for each vertex. To use this problem to compute the maximum cut of $G$, we set the weights of each vertex $v$ in $G-S$ according to the total weight of the edges between $v$ and $S$ that are cut in the assumed partition of $S$. Maximizing over all possible partitions for $S$ gives the maximum cut of $G$.

While we use very similar techniques as in \cite{blackboxFPT,blackboxFPTlinear}, our main technical contribution lies in the reduction rules. Our reduction rules have to be more specific, i.e., each reduction rule has a stronger precondition. This is due to the fact that when performing any reduction, the change in the weight of a minimum spanning forest (as needed for the \PT bound) is much more difficult to track than the number of vertices in the graph (as needed for the \EE bound). Since our rules are more specific, we also need twice as many rules as in \cite{blackboxFPT} (and one more rule than \cite{blackboxFPTlinear}) to ensure that always at least one rule is applicable to a given graph.

\section{Preliminaries}

In the rest of this paper we consider every graph to be a simple graph $G=(V,E)$, where $V$ is the set of vertices, and $E\subseteq \binom{V}{2}$ is the set of edges. A graph is weighted if it is equipped with a positive integer edge-weight function $w:E\rightarrow \N$. For any two disjoint subsets $A,B\subseteq V$ we denote by $E(A,B)$ the set of edges between $A$ and $B$, by $w(A,B)$ the total weight of the edges in $E(A,B)$, and by $\min(A,B)$ the minimum weight of any edge in $E(A,B)$. For a subset $A \subseteq V$, we denote by $N(A)$ the set of vertices in $V \setminus A$ that have a neighbor in $A$.

A \emph{cut} is a subset $C\subseteq V$, and the \emph{weight} of a cut $C$ is the total weight of the edges connecting a vertex in $C$ to a vertex in $V\setminus C$, i.e., $w(C)=w(C,V\setminus C)$.

For any set $A\subseteq V$ we write $G[A]$ for the graph on $A$ induced by $G$, and $G-A$ for the graph on $V\setminus A$ induced by $G$.

We say that a graph is \emph{uniform} if all of the edges have the same weight. More specifically, we call a graph $c$-uniform if all edges have weight $c$.

A graph $(V,E)$ is called \emph{biconnected}, if $|V|\geq 1$, and for every vertex $v\in V$, $G-\{v\}$ is connected. A \emph{biconnected component} of a graph is a maximal biconnected subgraph, also referred to as a \emph{block}. It is well-known that the biconnected components of every graph partition its edges. A vertex that participates in more than one biconnected component is a \emph{cut vertex} (usually defined as a vertex whose removal disconnects a connected component). A graph can thus be decomposed into biconnected components and cut vertices.
\begin{definition}[Block-Cut Forest]
    The \emph{block-cut forest} $F$ of a graph $G$ has vertex set $V(F) = \mathcal{C} \cup \mathcal{B}$, where $\mathcal{C}$ is the set of cut vertices of $G$ and $\mathcal{B}$ is the set of biconnected components of $G$, and $\{B,c\}$ is an edge in $F$ if $B \in \mathcal{B}$, $c \in \mathcal{C}$, and $c \in V(B)$. 
\end{definition}
It is not hard to see that the block-cut forest $F$ of a graph $G$ is indeed a forest, since a cycle in it would imply a cycle in $G$ going through multiple biconnected components, thus contradicting their maximality. Moreover, each connected component of $F$ corresponds to a connected component of $G$, and all leaves of $F$ are biconnected components in $G$. We refer to the biconnected components of $G$ that correspond to leaves of $F$ as \emph{leaf-blocks} of $G$. 

\begin{definition}[Uniform-Clique-Forest]
    A weighted graph is a \emph{uniform-clique-forest} if each of its blocks $B$ is a uniform clique.
\end{definition}

\begin{definition}
    The problem \maxcutwithweightsproblem is given as follows.
    \begin{description}
        \item[Input:] A weighted graph $(V,E)$ with edge-weight function $w$, as well as two vertex-weight functions $w_0:V\rightarrow \N$, $w_1:V\rightarrow \N$.
        \item[Output:] A cut $C$ maximizing $w(C)+\sum_{v\in C}w_1(v)+\sum_{v\not\in C}w_0(v)$.
    \end{description}
\end{definition}

We show in \Cref{sec:onuniform} that \maxcutwithweightsproblem is solvable in linear time if the input graph is a uniform-clique-forest.

\section{Reducing to a Uniform-Clique-Forest}
In the first part of our algorithm, we wish to either already conclude that the input graph has a cut of the desired size, or to find some set $S$ of vertices such that $G-S$ is a uniform-clique-forest.
\begin{lemma}\label{lem:reduce}
    For any graph $G=(V,E)$ on $n$ vertices with $m$ edges and weight function $w:E\rightarrow \N$ and any integer $k$, in time $O(n+k\cdot m)$ one can either decide that $G$ has a cut of size at least $\frac{w(G)}{2}+\frac{\mst(G)}{4}+\frac{k}{4}$, or find a set $S\subseteq V$ such that $|S|\leq 3k$ and $G-S$ is a uniform-clique-forest.
\end{lemma}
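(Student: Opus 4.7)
The plan is to proceed by exhaustive application of structural reduction rules on a working graph $G'$, initialized to $G$, together with an auxiliary set $S$, initialized to $\emptyset$, and a ``slack counter'' $q$, initialized to $0$, that records certified excess cut value beyond $\tfrac{w(G')}{2}+\tfrac{\mst(G')}{4}$. Each rule fires on a specific local pattern in the block-cut forest of $G'-S$, typically inside a leaf-block, and either deletes some edges or vertices while adjusting $q$ by the exposed excess, or inserts at most three new vertices into $S$ while increasing $q$ by at least $\tfrac{1}{4}$. As soon as $q$ reaches $\tfrac{k}{4}$ the procedure halts and reports the first alternative: combining the \PT bound on $G'-S$ with the accumulated slack and with a standard extension argument lifting cuts of $G'-S$ to cuts of $G$ yields a cut of $G$ of size at least $\tfrac{w(G)}{2}+\tfrac{\mst(G)}{4}+\tfrac{k}{4}$. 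Otherwise the procedure terminates without $q$ reaching $\tfrac{k}{4}$, in which case $|S| \leq 3k$ and no rule applies, so by the completeness of the rule set $G-S$ is a uniform-clique-forest.

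The substance of the proof is the design of the rule set and the verification that (i) every graph that is not a uniform-clique-forest triggers some rule, and (ii) each rule is individually correct in the sense of the previous paragraph. For (i) I would organize the case analysis around the leaf-blocks of the block-cut forest of $G'-S$: a leaf-block that is not a clique contains a non-edge between two of its vertices, and a leaf-block that is not uniform contains two edges of different weights, and both situations must be covered by a rule (together with a few rules handling small-degree cut vertices to glue leaf-blocks back to the rest). For (ii) the local excess certifying $\tfrac{1}{4}$ units of slack per triple of vertices added to $S$ comes from a probabilistic argument in the same spirit as the proof of the \PT bound itself, applied to the leaf-block in isolation and combined with a worst-case accounting for how the deleted portion interacts with $w(G')$ and $\mst(G')$.

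The main technical obstacle, as flagged in \Cref{sec:overview}, is that $\mst(G')$ is much harder to control under deletions than $|V(G')|$, which is the quantity tracked by the analogous reductions in \cite{blackboxFPT, blackboxFPTlinear}. Removing an edge $e$ decreases $\mst(G')$ by $w(e)$ only when $e$ lies in every minimum spanning forest of $G'$, and removing a vertex forces a nontrivial rewiring of the spanning forest. Each rule must therefore have a precondition tight enough that the change in $\mst(G')$ can be computed exactly from purely local information in the leaf-block under consideration; this is precisely why the present lemma needs roughly twice as many rules as the \EE-bound version. For the running time, I would precompute the block-cut forest in $O(n+m)$ time and update it implicitly as reductions are applied. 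Each rule application can be located and executed in $O(m)$ time by a single scan, and the total number of applications is $O(k)$ since $q$ and $|S|$ are each bounded by $O(k)$ and every application increases one of them by $\Omega(1)$, yielding the claimed $O(n + k \cdot m)$ bound.
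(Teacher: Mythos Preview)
Your high-level plan is in the right spirit, but there is a concrete gap in the runtime argument and a related bookkeeping issue.

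The runtime gap is this: you assert that every rule application increases either $q$ or $|S|$ by $\Omega(1)$, hence there are only $O(k)$ applications. That cannot be made to work. Suppose every leaf-block of the current graph is already a uniform clique, but some interior block is not. Then the graph is not a uniform-clique-forest, so you may not terminate; yet none of your leaf-block rules (which you say fire on a leaf-block that is not a clique or not uniform) applies. The only way to expose the bad interior block as a leaf is to first peel off the good uniform-clique leaf-blocks. Removing a uniform-clique leaf-block attached at a cut vertex yields \emph{zero} excess over the \PT bound (the bound is tight on odd uniform cliques) and needs no vertex to be marked, so such a ``free'' rule increases neither $q$ nor $|S|$, and it may fire $\Omega(n)$ times. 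This is precisely the paper's \Cref{rule:vertexclique_constant}. The paper's runtime analysis therefore splits into two parts: the $k$-reducing rules are applied at most $k$ times at cost $O(m)$ each, while all applications of \Cref{rule:vertexclique_constant} together cost $O(m)$ because each one is charged to the edges it deletes. Your single $O(k)$ bound on the number of applications does not hold.

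The bookkeeping issue is that you fire rules on $G'-S$, delete vertices and edges from $G'$, and then conclude that $G-S$ (original $G$) is a uniform-clique-forest. These are different graphs once you start deleting. In the paper the rules act on $G'$ itself; some removed vertices are additionally \emph{marked} into $S$, and a separate inductive argument (\Cref{lem:GminusSuniformCF}) shows that every unmarked removed piece was a uniform-clique leaf-block, which is exactly what makes $G-S$ a uniform-clique-forest at the end. Your sketch conflates the cut-lifting argument (from $G'$ to $G$, via the soundness of each rule) with this structural bookkeeping, and as written does not supply the link from ``no rule applies to $G'-S$'' to ``$G-S$ is a uniform-clique-forest''.
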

Note that we write $\frac{k}{4}$ instead of just $k$. The reason for this is that with our reduction rules we make ``progress'' reducing the difference to the \PT bound in increments of $\frac{1}{4}$.

To prove \Cref{lem:reduce} we use eight reduction rules, closely inspired by the reduction rules used in \cite{blackboxFPT,blackboxFPTlinear}. Each reduction rule removes some vertices from the given graph, possibly \emph{marks} some of the removed vertices to be put into $S$, and possibly \emph{reduces} the parameter $k$ by $1$. To prove \Cref{lem:reduce}, the reduction rules will be shown to fulfill the following properties.

Firstly, each reduction rule ensures a one-directional implication: if the reduced graph $G'$ contains a cut of size  $\frac{w(G')}{2}+\frac{\mst(G')}{4}+\frac{k'}{4}$ (where $k'$ is the possibly reduced $k$), then the original graph $G$ must also contain a cut of size $\frac{w(G)}{2}+\frac{\mst(G)}{4}+\frac{k}{4}$. By the \PT bound, if $k$ ever reaches $0$, it is clear that the original graph $G$ must have contained a cut of the desired size.

Secondly, we need that to every graph with at least one edge, at least one of the rules applies. To get our desired runtime, we also need that an applicable rule can be found and applied efficiently.

Thirdly, every rule should only mark at most three vertices to be added to $S$. If a rule does not reduce $k$, it may not mark any vertices. This ensures that at most $3k$ vertices are added to~$S$.

Lastly, we require that after exhaustively applying the rules and reaching a graph with no more edges, the graph $G-S$ is a uniform-clique-forest.

We will now state our reduction rules, and then prove these four properties in \Cref{lem:rulesaresound,lem:rulecanbeapplied}, \Cref{obs:markthree}, and \Cref{lem:GminusSuniformCF}, respectively. For simplicity, each reduction rule is stated in such a way that it assumes the input graph to be connected. If the input graph is disconnected, instead consider $G$ to be one of its connected components. Each rule preserves connectedness of the connected component it is applied to, which we also show in \Cref{lem:rulesaresound}. Note further that if the connected component the rule is being applied to is also biconnected, then if the precondition requires some vertex to be a cut vertex, any vertex can play that role, although technically there are no cut vertices. We state this once here for simplicity, instead of saying each time that either $v$ is a cut vertex or $G$ is biconnected. We visualize the eight rules in \Cref{fig:rules}.

\refstepcounter{krule}
\label{rule:edge}
\noindent
\framebox[\textwidth][l]{
    \begin{tabularx}{0.98\textwidth}{lX}
        \krulehead: &Let $\{x,y\},\{y,z\} \in E$ be such that $w(x,y) > w(y,z)$ and $G - \{x,y\}$ is connected.\\
        Remove: &$\{x,y\}$
        \\Mark: &$\{x,y\}$
        \\Reduce $k$: &Yes
    \end{tabularx}
}

\smallskip
\refstepcounter{krule}
\label{rule:vertexclique_constant}
\noindent
\framebox[\textwidth][l]{
    \begin{tabularx}{0.98\textwidth}{lX}
        \krulehead: &Let $X \subseteq V$, $v \in V \setminus X$ be such that $X \cup \{v\}$ is a leaf-block of $G$ with cut vertex $v$, and $G[X\cup\{v\}]$ is a uniform clique.
        \\Remove: &$X$
        \\Mark: &$\emptyset$
        \\Reduce $k$: &No
    \end{tabularx}
}

\smallskip
\refstepcounter{krule}
\label{rule:vertexclique}
\noindent
\framebox[\textwidth][l]{
    \begin{tabularx}{0.98\textwidth}{lX}
        \krulehead: &Let $X \subseteq V$, $v \in V \setminus X$ be such that $X \cup \{v\}$ is a clique and a leaf-block of $G$ with cut vertex $v$; $G[X]$ is uniform, and $G[X\cup\{v\}]$ is not uniform.
        \\Remove: &$X$
        \\Mark: &$\{v\}$
        \\Reduce $k$: &Yes
    \end{tabularx}

}
 
\smallskip
\refstepcounter{krule}
\label{rule:vertexmark}
\noindent
\framebox[\textwidth][l]{
    \begin{tabularx}{0.98\textwidth}{lX}
        \krulehead: &Let $X \subseteq V$, $v \in V \setminus X$ be such that $X \cup \{v\}$ is a leaf-block of $G$ with cut vertex $v$; $v$ has at least two neighbors in $X$; $G[X]$ is a uniform clique; $G[X \cup \{v\}]$ is not a clique.
        \\Remove: &$X$
        \\Mark: &$\{v\}$
        \\Reduce $k$: &Yes
    \end{tabularx}
}

\smallskip
\refstepcounter{krule}
\label{rule:specialvertexmark}
\noindent
\framebox[\textwidth][l]{
    \begin{tabularx}{0.98\textwidth}{lX}
        \krulehead: &Let $X \subseteq V$, $v \in V \setminus X$ be such that $X \cup \{v\}$ is a leaf-block of $G$ with cut vertex $v$; $G[X]$ is a clique; $v$ has exactly two neighbors $x,y$ in $X$; all edges in $G[X]$ have weight~$c$, except $\{x,y\}$, which has weight $w(x,y) > c$; $w(v,x),w(v,y)\geq c$.
        \\Remove: &$X$
        \\Mark: &$\{v,x,y\}$
        \\Reduce $k$: &Yes
    \end{tabularx}
}

\smallskip
\refstepcounter{krule}
\label{rule:triplet}
\noindent
\framebox[\textwidth][l]{
    \begin{tabularx}{0.98\textwidth}{lX}
        \krulehead: &Let $a,b,c \in V$ be such that $\{a,b\},\{b,c\} \in E$; $\{a,c\}\notin E$; $G - \{a,b,c\}$ is connected; $w(a,b) = w(b,c)$; and $2w(a,b) > \min(\{a,b,c\},V\setminus\{a,b,c\})$.
        \\Remove: &$\{a,b,c\}$
        \\Mark: &$\{a,b,c\}$
        \\Reduce $k$: &Yes
    \end{tabularx}
}

\smallskip
\refstepcounter{krule}
\label{rule:specialtriplet}
\noindent
\framebox[\textwidth][l]{
    \begin{tabularx}{0.98\textwidth}{lX}
        \krulehead: &Let $v,a,b,c$ $\in$ $V$ be such that $\{a,b,c,v\}$ is a leaf-block of $G$ with cut vertex $v$; $\{a,b\},\{b,c\}, \{a,v\}, \{c,v\}\in E$; $\{a,c\}\notin E$; $w(a,b) = w(b,c)$; $w(a,v), w(c,v) \geq 2w(a,b)$; and if $\{b,v\} \in E$ then $w(b,v) \geq 2w(a,b)$.
        \\Remove: &$\{a,b,c\}$
        \\Mark: &$\{a,b,c\}$
        \\Reduce $k$: &Yes
    \end{tabularx}
}

\smallskip
\refstepcounter{krule}
\label{rule:specialxy}
\noindent
\framebox[\textwidth][l]{
    \begin{tabularx}{0.98\textwidth}{lX}%
        \krulehead: &Let $x,y\in V$ be such that $\{x,y\}\notin E$; $G - \{x,y\}$ has exactly two connected components $X$ and $Y$; $G[X\cup \{x\}]$ and $G[X \cup \{y\}]$ are both $c$-uniform cliques; and $x$ and $y$ have exactly one neighbor $v$ in $Y$.
        \\Remove: &$X \cup \{x,y\}$
        \\Mark: &$\{x,y\}$
        \\Reduce $k$: &Yes
    \end{tabularx}
}
\smallskip

\begin{figure}[htb!]
    \centering
    \includegraphics[]{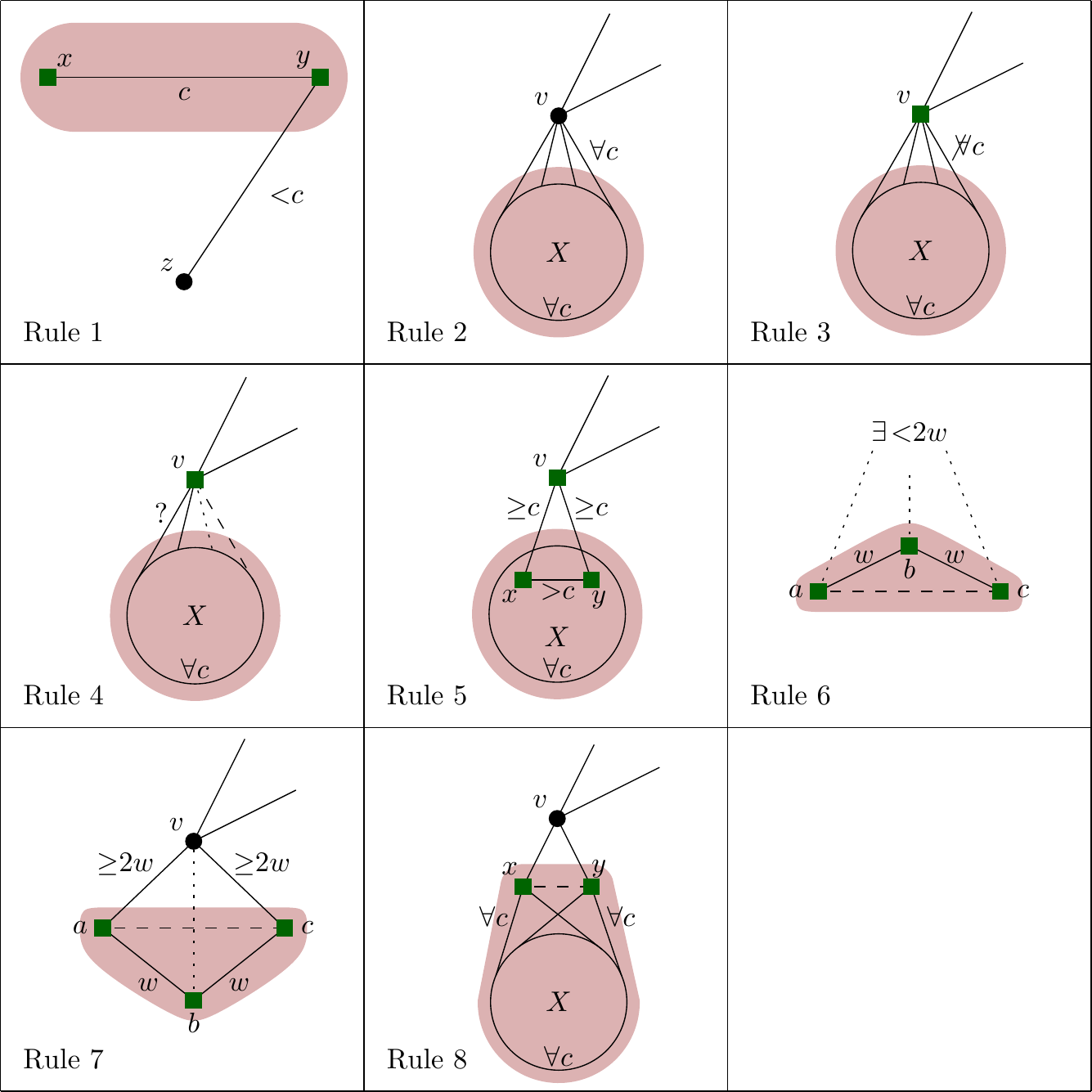}
    \caption{The eight reduction rules. An edge is drawn normally if it must exist for the rule to apply. Some edges are drawn dashed to emphasize that they \emph{must not} exist for the rule to apply. Some additional edges are drawn dotted to emphasize that they \emph{may} exist but do not have to. Red shading indicates the vertices removed by the rule, while vertices marked by the rule are drawn using a green square.}
    \label{fig:rules}
\end{figure}

We first state the formalizations of our four properties, then prove \Cref{lem:reduce}, and only then prove each of our properties.

\begin{lemma}\label{lem:rulesaresound}
    Let $G=(V,E)$ be a graph with weights $w$, and let $k$ be any positive integer. Let $G'$ be the result of one application of one of the rules 1--8 to $G$, and $k'$ the resulting parameter. Then, if $G'$ has a cut of size at least $\frac{w(G')}{2}+\frac{\mst(G')}{4}+\frac{k'}{4}$, then $G$ must contain a cut of size at least $\frac{w(G)}{2}+\frac{\mst(G)}{4}+\frac{k}{4}$. Furthermore, if $G$ is connected, then $G'$ is connected.
\end{lemma}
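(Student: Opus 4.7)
The plan is to handle each of the eight rules separately. In each case, I take a cut $C'$ of $G'$ with $w_{G'}(C') \geq \frac{w(G')}{2} + \frac{\mst(G')}{4} + \frac{k'}{4}$ and explicitly extend it to a cut $C$ of $G$ by choosing a side for every removed vertex. Writing $\Delta w = w(G)-w(G')$, $\Delta\mst = \mst(G)-\mst(G')$, and $\Delta k = k-k' \in \{0,1\}$, subtracting the two bounds reduces the soundness claim to
\[
    w_G(C) - w_{G'}(C') \;\geq\; \frac{\Delta w}{2} \;+\; \frac{\Delta\mst}{4} \;+\; \frac{\Delta k}{4}.
\]
Connectivity I will check at the very end: Rules~1 and~6 explicitly require $G$ minus the removed vertices to be connected; Rule~8 deletes one full side of a vertex $2$-separator, leaving the other side as a connected component of $G-\{x,y\}$; and Rules~2, 3, 4, 5, and~7 remove only vertices from inside a leaf-block of $G$, so the cut vertex $v$ keeps the rest of $G$ attached.

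To control the right-hand side I bound $\Delta\mst$ by exhibiting concrete replacement edges. For the leaf-block rules, re-attaching the removed set to the MSF of $G'$ uses at most $|X|$ (plus occasionally one further) edges inside the block, all of the uniform weight $c$ guaranteed by the precondition. For Rule~1, integrality of the weights together with $w(x,y) > w(y,z)$ yields $\Delta\mst \leq w(x,y) + w(y,z) \leq 2w(x,y) - 1$. For Rules~6 and~7, the hypothesis $2w(a,b) > \min(\{a,b,c\}, V\setminus\{a,b,c\})$, respectively $w(a,v), w(c,v) \geq 2w(a,b)$, provides the analogous two-edge bound. On the cut side I take the best of the $2^r$ placements of the $r$ removed vertices: averaging over all placements cuts any removed edge with probability at least $\tfrac12$, so the best placement already achieves $w_{G'}(C') + \tfrac{\Delta w}{2}$, and choosing one that additionally cuts the rule's distinguished heavy edges (e.g.~$\{x,y\}$ for Rule~1; both $\{a,b\}$ and $\{b,c\}$ for Rules~6 and~7) gains an extra half of those edges' weight, which covers $\tfrac{\Delta\mst}{4} + \tfrac{\Delta k}{4}$.

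The cleanest case is Rule~2, where $\Delta k = 0$ and one simply places the $|X|$ removed clique vertices to balance $v$ inside a maximum cut of the uniform clique $G[X\cup\{v\}]$. This contributes exactly $c\lfloor (|X|+1)^2/4\rfloor$, and the identity $\lfloor n^2/4\rfloor \geq \tfrac{n^2-1}{4}$ shows this tightly matches $\tfrac{\Delta w}{2} + \tfrac{\Delta\mst}{4}$. Rules~3, 4, and~5 perturb this calculation by one non-uniform or missing edge at $v$, whose excess is absorbed by $\Delta k = 1$. Rules~6 and~7 are near-triangles: placing $a$ and $c$ on the same side opposite to $b$ cuts both equal-weight edges $\{a,b\}$ and $\{b,c\}$, and the MSF change is then paid for through the $2w(a,b)$ hypothesis.

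I expect the main obstacle to be Rule~8, which removes two uniform cliques glued at cut vertices $x, y$ sharing a single external neighbor $v$. Placing $x$ and $y$ on opposite sides decouples the two cliques, so the Rule~2 calculation can be applied twice; the delicate step is to show that the cross-edges $\{x,v\}, \{y,v\}$ and the doubled internal MSF cost $2c|X|$ still fit into the $\tfrac{\Delta w}{2} + \tfrac{\Delta\mst}{4} + \tfrac14$ budget. This is exactly where both the $c$-uniformity of the two cliques and the hypothesis that $v$ is the \emph{unique} common external neighbor are needed.
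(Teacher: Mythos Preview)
Your overall framework---extend $C'$ rule by rule and verify $w_G(C)-w_{G'}(C')\geq \tfrac{\Delta w}{2}+\tfrac{\Delta\mst}{4}+\tfrac{\Delta k}{4}$, bounding $\Delta\mst$ via explicit spanning-forest extensions---is exactly the paper's, and your treatment of \Cref{rule:edge,rule:vertexclique_constant,rule:triplet} is essentially correct. But two of the remaining cases are broken as written.

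For \Cref{rule:specialxy} you have misread the precondition. There is one set $X$, and both $G[X\cup\{x\}]$ and $G[X\cup\{y\}]$ are $c$-uniform cliques on that \emph{same} $X$; thus $G[X\cup\{x,y\}]$ is a $c$-uniform clique on $|X|+2$ vertices with only the edge $\{x,y\}$ deleted. There are not ``two cliques glued at $x,y$'', and the MSF replacement cost is $c(|X|+1)$ plus one edge to $v$, not $2c|X|$. Placing $x$ and $y$ on opposite sides, as you propose, forces every $u\in X$ to cut exactly one of its two weight-$c$ edges to $\{x,y\}$ regardless of its own side, so $E(X,\{x,y\})$ contributes only $c|X|$; taking for instance $|X|=2$, $c=1$, $w(x,v)=w(y,v)=1$, the best such cut has gain $4$ while $\tfrac{\Delta w}{2}+\tfrac{\Delta\mst}{4}+\tfrac14=4.75$. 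The paper instead puts $x,y$ on the \emph{same} side, opposite $v$, so both edges to $v$ are cut and each vertex of $X$ placed opposite $\{x,y\}$ contributes $2c$. For \Cref{rule:specialtriplet} your conditional-averaging bound gain $\geq\tfrac{\Delta w}{2}+w(a,b)$ is too weak: the rule only assumes $w(a,v),w(c,v)\geq 2w(a,b)$ (lower, not upper, bounds), so $\Delta\mst$ can equal $2w(a,b)+\min(w(a,v),w(c,v))\geq 4w(a,b)$, whence $\tfrac{\Delta\mst}{4}+\tfrac14>w(a,b)$. The paper therefore commits to a single placement chosen according to whether $\{b,v\}\in E$: if not, $a,c$ opposite $v$ and $b$ with $v$ cuts \emph{every} removed edge; if so, $a,b,c$ are all placed opposite $v$. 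Finally, your one-sentence account of \Cref{rule:vertexclique,rule:vertexmark,rule:specialvertexmark} hides real work; \Cref{rule:specialvertexmark} alone needs a four-way case split (on whether $w(v,x)$ or $w(v,y)$ exceeds $c$, on $|X\setminus\{x,y\}|$, and on $w(x,y)$ versus $2c$), and the non-uniform subcases of \Cref{rule:vertexclique,rule:vertexmark} rely on an auxiliary claim that strengthens the \PT bound by $\tfrac14$ whenever two incident edges have different weights.
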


\begin{lemma}\label{lem:rulecanbeapplied}
    Let $G=(V,E)$ be a weighted graph with at least one edge. Given the block-cut forest of $G$ we can either apply \Cref{rule:vertexclique_constant} in time $O(|E'|)$ where $E'$ is the set of edges removed by applying \Cref{rule:vertexclique_constant}, or we can find and apply another rule in time $O(|E|)$. In the same time we can also adapt the  block-cut forest.
\end{lemma}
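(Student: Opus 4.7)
The plan is a structural case analysis on a leaf-block of $G$, supplemented by bookkeeping to keep the block-cut forest current. Given the forest, I would pick a leaf-block $B$ in constant time, designate $v$ as its cut vertex (or an arbitrary vertex of $B$ if $G$ is biconnected), and set $X := V(B) \setminus \{v\}$. The first step is a scan of $G[X \cup \{v\}]$ in $O(|E(B)|)$ time to check whether it is a uniform clique; if so, \Cref{rule:vertexclique_constant} fires, and since the set of removed edges $E'$ is exactly $E(B)$, the total cost is $O(|E'|)$. The forest update is a single-leaf deletion, possibly demoting $v$ from cut vertex to ordinary vertex, which is constant-time and local.

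If \Cref{rule:vertexclique_constant} does not apply I am allowed up to $O(|E|)$ time. When $G[B]$ is still a clique, I distinguish: if $G[X]$ is uniform, \Cref{rule:vertexclique} applies; otherwise $G[X]$ is a non-uniform clique on $\geq 3$ vertices and I can find incident edges $\{x,y\}, \{y,z\}$ in $G[X]$ with $w(x,y) > w(y,z)$. Removing $\{x,y\}$ keeps $B$ connected (a clique on $\geq 3$ vertices minus one edge remains connected) and hence keeps $G$ connected, so \Cref{rule:edge} fires.

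When $G[B]$ is not a clique I refine by the structure of $G[X]$ and the attachment pattern of $v$. \Cref{rule:vertexmark} handles a uniform-clique $G[X]$ with $v$ having at least two neighbors in $X$; \Cref{rule:specialvertexmark} handles a clique $G[X]$ that differs from uniform only by a single heavier edge to whose endpoints $v$ is attached with large enough weight. If neither fires, I scan inside $B$ for a pair of adjacent edges $\{a,b\}, \{b,c\}$ with $\{a,c\} \notin E$ and $w(a,b) = w(b,c)$; depending on whether $G - \{a,b,c\}$ remains connected and on how the triplet sits inside $B$, this yields \Cref{rule:triplet} or \Cref{rule:specialtriplet}. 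The residual configuration, where $X$ splits into two uniform cliques attached via a common external neighbor to two non-adjacent boundary vertices, is exactly \Cref{rule:specialxy}.

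The main obstacle is exhaustiveness in the non-clique case: I need to show that if none of \Cref{rule:vertexmark,rule:specialvertexmark,rule:triplet,rule:specialtriplet,rule:specialxy} applies to $B$, then $B$ cannot simultaneously be biconnected and non-complete. I would pick a non-edge in $B$, invoke biconnectivity to obtain two internally disjoint paths between its endpoints, and use the failure of \Cref{rule:vertexmark,rule:specialvertexmark} to constrain the edge-weight pattern along these paths until either an equal-weight length-two path is forced (matching \Cref{rule:triplet,rule:specialtriplet}) or the split pattern of \Cref{rule:specialxy} emerges. For the forest updates, the destructive leaf-block rules are single-leaf deletions with a possible demotion of $v$; \Cref{rule:edge} leaves the block intact by its connectivity precondition; \Cref{rule:triplet} may fracture an internal block, which can be re-decomposed in time linear in its own size, charged to $O(|E|)$.
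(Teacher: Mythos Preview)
Your high-level skeleton (pick a leaf-block, test for a uniform clique, then branch) matches the paper, but the non-clique branch has a real gap. The paper does not improvise a case analysis on $B$; it invokes a structural lemma from prior work (\cite[Lemma~3]{blackboxFPTlinear}) asserting that for any leaf-block $B=X\cup\{v\}$, one of four situations A)--D) holds and can be \emph{found} in time linear in $|E(B)|$: $G[B]$ is a clique; $G[X]$ is a clique but $G[B]$ is not; $v$ has exactly two neighbours $x,y$ in $X$ with $\{x,y\}\notin E$ and $G[X\setminus\{x\}]$, $G[X\setminus\{y\}]$ both cliques; or there exist $a,b,c\in B$ with $\{a,b\},\{b,c\}\in E$, $\{a,c\}\notin E$, and $G-\{a,b,c\}$ connected. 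Your proposal never establishes the existence of such $a,b,c$ in linear time, and your ``scan inside $B$ for adjacent edges with $w(a,b)=w(b,c)$'' is the wrong primitive: equal-weight adjacent edges need not exist, and when they do not the paper falls back to \Cref{rule:edge}, which you omit entirely in the non-clique case.

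The deepest part, case D), is also where your sketch is weakest. Having $a,b,c$ in hand, the paper first tries \Cref{rule:edge} (if $w(a,b)\neq w(b,c)$) and \Cref{rule:triplet} (if $2w(a,b)>\min(\{a,b,c\},B\setminus\{a,b,c\})$). If neither fires, it computes the block-cut forests of $G[B]-\{a,b,c\}$ and of $G[B]-\{u\}$ for each $u\in\{a,b,c\}$, and inspects, for every edge $\{u,z\}$ with $z\in B\setminus\{a,b,c\}$, whether $z$ is a cut vertex in these auxiliary graphs. This either produces an edge on which \Cref{rule:edge} applies, or forces (via a separate claim about cut vertices in $G-X$ when $|N(X)|\ge 2$) that $N(\{a,b,c\})=\{v\}$, which is precisely the configuration of \Cref{rule:specialtriplet}. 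Your proposed exhaustiveness argument---pick a non-edge, take two internally disjoint paths, and let the failure of \Cref{rule:vertexmark,rule:specialvertexmark} ``constrain the edge-weight pattern''---does not supply this; Rules~4 and~5 have preconditions about $G[X]$ being a (near-)uniform clique, and their failure gives no information about weights along an arbitrary pair of paths. Likewise, \Cref{rule:specialxy} is not a catch-all residual case: in the paper it is triggered exactly in situation C), after first checking that $G[X]$ is uniform (and applying \Cref{rule:edge} if not).
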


\begin{observation}\label{obs:markthree}
    Each rule marks at most three vertices. \Cref{rule:vertexclique_constant}, the only rule that does not reduce $k$, does not mark any vertices.
\end{observation}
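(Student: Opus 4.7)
The plan is to prove this observation by direct inspection of the eight reduction rules, since each rule explicitly specifies the set of marked vertices as well as whether $k$ is reduced. The proposal is a single case analysis walking through the \emph{Mark} and \emph{Reduce $k$} fields in the definitions above.

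For the first claim, I would simply enumerate the sizes of the \emph{Mark} sets: \Cref{rule:edge} marks $\{x,y\}$ (two vertices), \Cref{rule:vertexclique} and \Cref{rule:vertexmark} each mark the singleton $\{v\}$, \Cref{rule:specialvertexmark} marks $\{v,x,y\}$ (three vertices), \Cref{rule:triplet} and \Cref{rule:specialtriplet} each mark $\{a,b,c\}$ (three vertices), \Cref{rule:specialxy} marks $\{x,y\}$ (two vertices), and \Cref{rule:vertexclique_constant} marks nothing. In every case the marked set has size at most three, establishing the first assertion. For the second claim, one inspects the \emph{Reduce $k$} field of each rule: all rules except \Cref{rule:vertexclique_constant} are listed with ``Reduce $k$: Yes'', and \Cref{rule:vertexclique_constant} is the unique rule with \emph{Mark:} $\emptyset$.

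There is no genuine obstacle here; the statement is intended as a bookkeeping lemma that packages information already visible in the rule definitions, so that later arguments (in particular the proof of \Cref{lem:reduce} bounding $|S| \le 3k$) can cite a single reference rather than re-examining every rule. The only care needed is to make sure the enumeration is exhaustive and matches the rule statements verbatim.
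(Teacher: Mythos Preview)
Your proposal is correct and matches the paper's approach: the paper states \Cref{obs:markthree} as an observation without any accompanying proof, since the claim is immediate from inspecting the \emph{Mark} and \emph{Reduce $k$} fields of the eight rule boxes. Your enumeration is accurate and exhaustive.
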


\begin{lemma}\label{lem:GminusSuniformCF}
     Let $S$ be the set of vertices marked when exhaustively (i.e., until $G$ has no edges) applying Rules 1--8 to a graph $G$. Then $G - S$ is a uniform-clique-forest.
\end{lemma}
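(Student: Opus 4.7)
The plan is to induct on the number of rule applications. In the base case no rule has been applied, so $G$ is already edgeless, $S=\emptyset$, and $G-S=G$ is trivially a uniform-clique-forest (each vertex being an isolated trivial block). For the inductive step, let the first rule application transform $G$ into $G'$ while marking the set $M$, and let $S'$ be the set marked during the subsequent exhaustive reduction of $G'$, so $S=M\cup S'$ and by induction $G'-S'$ is a uniform-clique-forest. A preliminary observation used throughout: removing any subset of vertices from a uniform-clique-forest yields again a uniform-clique-forest, since each block is a uniform clique (and the restriction of a uniform clique to any vertex subset is still a uniform clique) and the block-cut structure remains a forest.

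Then I would walk through the eight rules. For \Cref{rule:edge}, \Cref{rule:triplet}, and \Cref{rule:specialtriplet}, every vertex removed by the rule is also marked (and \Cref{rule:edge} only removes an edge whose endpoints are both marked), so $G-S$ is obtained from $G'-S'$ by removing at most the two endpoints of the edge deleted in \Cref{rule:edge}, which is handled by the preliminary observation. For \Cref{rule:vertexclique_constant}--\Cref{rule:specialvertexmark}, the removed set $X$ sits inside a leaf-block $X\cup\{v\}$ with cut vertex $v$, so in $G$ every edge out of $X$ passes through $v$. Let $X^\ast$ denote the unmarked part of $X$---namely $X$ itself in \Cref{rule:vertexclique_constant}, \Cref{rule:vertexclique}, and \Cref{rule:vertexmark}, and $X\setminus\{x,y\}$ in \Cref{rule:specialvertexmark}. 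In each case $G[X^\ast]$ is a uniform clique: this is immediate from the premises in the first three cases, and for \Cref{rule:specialvertexmark} it follows because removing the two endpoints of the unique heavier edge $\{x,y\}$ from the clique $G[X]$ leaves only edges of weight $c$. If $v\in S$ (automatic for \Cref{rule:vertexclique}, \Cref{rule:vertexmark}, \Cref{rule:specialvertexmark}; also the case in \Cref{rule:vertexclique_constant} when $v\in S'$), then $X^\ast$ appears in $G-S$ as a disjoint uniform-clique component while the remaining vertices form $G'-S'$ minus $v$ (if $v$ was there), still a uniform-clique-forest by the preliminary observation. In the remaining subcase of \Cref{rule:vertexclique_constant} with $v\notin S'$, the vertex $v$ survives in $G-S$, and $G[X\cup\{v\}]$---a uniform clique by the rule's premise---is attached as one new block at the cut vertex $v$, again preserving the uniform-clique-forest property. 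Finally, \Cref{rule:specialxy} is analogous: $x,y$ are marked and are the only bridges between $X$ and the rest of $G$ (since $G-\{x,y\}$ has exactly the two components $X$ and $Y$), so $G-S=G[X]\sqcup(G'-S')$ with $G[X]$ a $c$-uniform clique inherited from $G[X\cup\{x\}]$.

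The main obstacle is the bookkeeping in \Cref{rule:vertexclique_constant} with $v\notin S'$: one must verify that attaching $G[X\cup\{v\}]$ to $G'-S'$ at $v$ adds exactly one new block of $G-S$ rather than merging with some existing block containing $v$. This is precisely what the leaf-block premise guarantees---all edges introduced when restoring $X$ lie strictly inside the clique $G[X\cup\{v\}]$---so the block-cut forest of $G-S$ is obtained from that of $G'-S'$ by appending a single new leaf-block at $v$, preserving the uniform-clique-forest structure.
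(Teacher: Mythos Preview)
Your proof is correct and follows essentially the same inductive strategy as the paper's, with the same case split over the eight rules; the only cosmetic difference is that the paper fixes the full marking set $S$ from the outset and runs the induction on the sequence $G_q,\ldots,G_1$, which lets it write $G_{i-1}-S=G_i-S$ directly for \Cref{rule:edge,rule:triplet,rule:specialtriplet} without needing your preliminary observation. Your parenthetical that \Cref{rule:edge} ``only removes an edge'' is a misreading---it removes the two \emph{vertices} $x,y$---but this is harmless, since under the correct reading $G-S=G'-S'$ exactly, an even simpler instance of your preliminary observation.
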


Let us now prove \Cref{lem:reduce} using these properties.

\begin{proof}[Proof of \Cref{lem:reduce}]
    We begin by computing the block-cut forest of $G$ in $O(n+m)$ time~\cite{biconnectedcomponents}. Then, we apply rules until we either reach $k=0$ or until we reach a graph with no edges. Whenever we apply a rule, we locally adapt the block-cut forest. In total we apply rules other than \Cref{rule:vertexclique_constant} at most $k$ times. By \Cref{lem:rulecanbeapplied} this takes at most $O(k\cdot m)$ time. Since applying \Cref{rule:vertexclique_constant} takes time $O(|E'|)$ where $E'$ is the set of edges removed, all applications of \Cref{rule:vertexclique_constant} together use time $O(m)$. The reduction step can thus be performed in $O(k\cdot m)$.

    If we have reached $k=0$, by the \PT bound and by \Cref{lem:rulesaresound} we can decide that our input graph contains a cut of the desired size. Otherwise, by \Cref{obs:markthree}, $S$ contains at most $3k$ vertices. By \Cref{lem:GminusSuniformCF}, $G-S$ then forms a uniform-clique-forest, and we have proven our desired statement.
\end{proof}

We will now proceed to prove \Cref{lem:rulesaresound,lem:rulecanbeapplied,lem:GminusSuniformCF}.
The main technical challenges are the proofs of \Cref{lem:rulesaresound,lem:rulecanbeapplied}. These proofs are more technically involved than the corresponding proofs from \cite{blackboxFPTlinear}. For \Cref{lem:rulesaresound} this is due to the fact that the weight of a minimum spanning forest is much more difficult to track through a reduction than the number of vertices. For \Cref{lem:rulecanbeapplied} the proof is more involved since our rules are more specific, and thus more case distinction is needed. We present the proof of \Cref{lem:rulesaresound} in \Cref{app:proofsoundness}, since despite its technicality, it is not very insightful.

To prove \Cref{lem:rulecanbeapplied} we use the following lemma, the proof of which follows from the proof of \cite[Lemma 3]{blackboxFPTlinear} rather directly.
\begin{lemma}
\label{lemma:structural_properties}
Let $G=(V,E)$ be a connected graph with at least one edge, and let $B\subseteq V$ be a biconnected component that is a leaf in the block-cut forest of $G$. Now, we write $B$ as $X\cup\{v\}$, where $v$ is the cut vertex disconnecting $B=X\cup\{v\}$ from $V\setminus B$ (if $B$ is an isolated vertex in the block-cut forest, i.e., it forms a connected component of $G$ that is also biconnected, then let $v$ be an arbitrary vertex in $B$). Then at least one of the following properties holds.
\begin{enumerate}
\item[A)] $G[X\cup\{v\}]$ is a clique.
\item[B)] $G[X]$ is a clique but $G[X\cup\{v\}]$ is not a clique.
\item[C)] $v$ has exactly two neighbors in $X$, $x$ and $y$. Furthermore, $\{x,y\} \notin E$, and $G[X\setminus\{x\}]$ and $G[X\setminus\{y\}]$ are cliques.
\item[D)] $X\cup\{v\}$ contains vertices $a,b,c$ such that $\{a,b\},\{b,c\} \in E$, $\{a,c\}\notin  E$, and $G - \{a,b,c\}$ is connected.
\end{enumerate}
Furthermore, such a property (including the vertices $x,y$ and $a,b,c$ for cases C and D, respectively) can be found in linear time in the number of edges in $G[X]$.
\end{lemma}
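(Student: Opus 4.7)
The proof follows the approach of Lemma~3 in~\cite{blackboxFPTlinear} almost verbatim, since the statement is purely structural and does not involve the edge weights. Cases~A and~B are handled directly: given the edges of $G[X]$ together with the edges incident to $v$ inside $B$, a counting argument decides in time $O(|E(G[X])|)$ whether $G[X\cup\{v\}]$ or $G[X]$ is a clique. If $G[X\cup\{v\}]$ is a clique, output~A; otherwise, if $G[X]$ is a clique, output~B. From now on we may assume that $G[X]$ is not a clique, so it contains at least one non-edge.

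The plan is to search for a triple $(a,b,c)$ witnessing case~D, and to show that if no such triple exists then the very restricted structure of case~C must hold. For each candidate middle vertex $b$ adjacent in $B$ to two non-adjacent vertices $a,c\in X\cup\{v\}$, one checks whether $G-\{a,b,c\}$ remains connected. Since $v$ is the unique vertex connecting $B$ to the rest of $G$, the connectivity check forces $v\notin\{a,b,c\}$ whenever $G\neq B$, so in that case the search can be restricted to triples inside $X$; when $G=B$ the cut-vertex role of $v$ is artificial and no such restriction applies.

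The main obstacle, and the technical heart of the proof, is to show that if no triple works for case~D then case~C must hold. The argument proceeds by contradiction: assuming either that $G[X]$ contains two distinct non-edges, or that $v$ has a neighbor in $X$ other than the endpoints of the (then unique) non-edge $\{x,y\}$, we exhibit a concrete triple $(a,b,c)$ for case~D. In each subcase the biconnectedness of $B$ is used to certify that $G-\{a,b,c\}$ is connected. This mirrors the case analysis of~\cite{blackboxFPTlinear}, which transfers essentially verbatim as it does not touch the weights.

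For the running time, each of the checks for cases~A, B, and~C amounts to counting edges incident to specific vertices in $G[B]$, which takes time $O(|E(G[X])|)$. The candidate triples for case~D produced by the case analysis form a bounded-size set, and each such triple can be tested for connectivity by a BFS on $G[B]-\{a,b,c\}$ in time $O(|E(G[X])|)$. Summing the costs yields the claimed overall $O(|E(G[X])|)$ bound.
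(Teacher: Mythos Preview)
Your proposal is correct and takes essentially the same approach as the paper: both treat the lemma as a direct consequence of \cite[Lemma~3]{blackboxFPTlinear}, noting that the structural statement is weight-independent, and both handle the clique checks for cases~A and~B explicitly while deferring the harder case analysis (the interplay between C and D) to the cited reference. The only cosmetic difference is that the paper checks~C directly and then invokes the cited lemma to produce a D-witness when A, B, C all fail, whereas you phrase the same dichotomy contrapositively (if no D-witness exists then~C must hold); this is the same argument. One small presentational wrinkle: your sentence ``for each candidate middle vertex $b$ \ldots\ one checks whether $G-\{a,b,c\}$ remains connected'' reads as an exhaustive search over all induced paths, which would not be linear; your later paragraph correctly states that the case analysis yields only boundedly many candidate triples, so you may want to harmonize the two.
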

\begin{proof}[Proof (sketch)]
    One can check whether $G[X]$ is a clique for some $X\subseteq V$ in time linear in the number of edges in $G[X]$. To do this, we simply check whether each edge is present in some fixed order. It is thus easy to check for cases A), B), and C) in linear time.

    In the proof of \cite[Lemma 3]{blackboxFPTlinear} it is shown that if none of the cases A), B), and C) apply, then vertices $a,b,c$ certifying case D) can be found in linear time.
\end{proof}

\begin{proof}[Proof of \Cref{lem:rulecanbeapplied}]
    Without loss of generality we can assume that $G$ is connected; otherwise, we consider $G$ to be an arbitrary connected component of our input graph that contains at least one edge. We first apply \Cref{lemma:structural_properties} on a leaf-block $X\cup\{v\}$ to find one of the four properties.
    
    \proofsubparagraph{Property A)} If property A) holds, we can check whether $G[X\cup\{v\}]$ is uniform in time $O(|E'|)$ where $E'$ is the set of edges in $G[X\cup\{v\}]$. In this process we can track also whether $G[X]$ is uniform. If $G[X\cup\{v\}]$ is uniform we apply \Cref{rule:vertexclique_constant}. Else, if only $G[X]$ is uniform, we apply \Cref{rule:vertexclique}. If not even $G[X]$ is uniform, we can find two edges $\{x,y\},\{y,z\}$ in $G[X]$ such that $w(x,y)>w(y,z)$. Since $X\cup\{v\}$ is a clique, $G-\{x,y\}$ must be connected. We can therefore apply \Cref{rule:edge}.

    \proofsubparagraph{Property B)} We can handle property B) in a similar way. If $G[X]$ is uniform, we can apply \Cref{rule:vertexmark}. Else, we apply case distinction on the number of vertices in $X$ adjacent to $v$. We first consider the case if vertex $v$ is adjacent to exactly two vertices in $X$.  Since $X$ is not uniform, there exist vertices $x,y \in X$ and a vertex $u\in X\cup\{v\}$ such that $w(x,y) > w(x,u)$. If the only such choice of $x,y$ is such that $x$ and $y$ are exactly the two vertices in $X$ adjacent to $v$, then we can apply \Cref{rule:specialvertexmark}. Else we can see that $G - \{x,y\}$ must be connected and apply \Cref{rule:edge}.
    Let us now consider the other case, that vertex $v$ is adjacent to at least three vertices in $X$. There must again exist vertices $u,x,y \in X$ so that $w(x,y) > w(x,u)$. Since $v$ is adjacent to at least three vertices and $G[X]$ is a clique, $G - \{x,y\}$ is connected and we can apply \Cref{rule:edge}.

    \proofsubparagraph{Property C)} To handle Property C) we first check whether $G[X]$ is uniform. If it is not, we can apply \Cref{rule:edge}, since for any edge $\{a,b\}$ in $G[X]$, $G-\{a,b\}$ is connected.  Knowing that $G[X]$ is uniform, and that $v$ has exactly two neighbors, we can apply \Cref{rule:specialxy}.
    
    \proofsubparagraph{Property D)} Note that since $G - \{a,b,c\}$ is connected, and since by its biconnectedness $B\neq\{a,b,c\}$, if $G - B$ is non-empty, then $v \notin \{a,b,c\}$. Next, again since $G[B]$ is biconnected, we must have that $E(\{a\},B\setminus\{a,b,c\})\neq\emptyset$ and $E(\{c\},B\setminus\{a,b,c\})\neq\emptyset$. From this we get that $G-\{a,b\}$ and $G-\{b,c\}$ must be connected. Thus, we can compare $w(a,b)$ and $w(b,c)$, and apply \Cref{rule:edge} if $w(a,b)\neq w(b,c)$. We can also compute the value $m=\min(\{a,b,c\},B\setminus\{a,b,c\})$. If $2w(a,b)>m$ we can apply \Cref{rule:triplet}.

    Next, we compute the block-cut forests for the four graphs $G_{abc}:=G[B]-\{a,b,c\}$ and $G_u:=G[B]-\{u\}$ for all $u\in\{a,b,c\}$. This can be performed in the required time, and yields the set of cut vertices for all these graphs. We now test for every $u\in\{a,b,c\}$ and for every vertex $z\in B \setminus \{a,b,c\}$ with $\{z,u\}\in E$ whether $z$ is a cut vertex in $G_u$. If for any such pair $z,u$ we have that $z$ is \emph{not} a cut vertex in $G_u$, this means that $G-\{u,z\}$ is connected, and we can thus apply \Cref{rule:edge} to that edge (recall that since we could not apply \Cref{rule:triplet} earlier, every edge in $E(\{a,b,c\},B\setminus\{a,b,c\})$ has weight at least twice as large as $w(a,b)=w(b,c)$).
    
    If every such $z$ adjacent to some $u\in\{a,b,c\}$ \emph{is} a cut vertex in $G_u$, we check whether any of these vertices is \emph{not} a cut vertex in $G_{abc}$. If one is not, we claim that we can apply \Cref{rule:specialtriplet}. We prove this by distinguishing two cases, depending on $u$:
    \begin{itemize}
        \item $u\in\{a,c\}$: Suppose without loss of generality that $u=a$. Since $z$ is a cut vertex of $G_a$, it follows that $G_a - z$ has $t\geq 2$ connected components $C_1, \dots, C_t$. Suppose without loss of generality that $b,c \in C_1$. If $C_1 \setminus \{b,c\} \neq \varnothing$, then $C_1 \setminus \{b,c\}$ and $C_2$ are two different connected components of $G_{abc} - z$, contradicting our assumption that $z$ is not a cut vertex of $G_{abc}$. Thus $C_1 = \{b,c\}$, implying that $b$ and $c$ have no neighbours in $B \setminus \{a,b,c,z\}$. Therefore $\{c,z\} \in E$ as $E(\{c\}, B \setminus \{a,b,c\}) \neq \varnothing$, so $c$ can also play the role of $u$. By symmetry, $a$ has no neighbours in $B \setminus \{a,b,c,z\}$ and $\{a,z\} \in E$. It follows that $\{a,b,c,z\}$ is a leaf-block of $G$ and so \Cref{rule:specialtriplet} applies.

        \item $u=b$: Let $S := B \setminus \{a,b,c,z\}$. Recall that we established that $E(\{a\},S \cup \{z\})$ and $E(\{c\}, S \cup \{z\})$ are both non-empty. We will show that either $\{a,z\} \in E$ or $\{c,z\} \in E$ (or both hold). Suppose that is not the case. Then $E(\{a\},S)$ and $E(\{c\}, S)$ are both non-empty. Since $z$ is not a cut vertex in $G_{abc}$, the graph $G[S]$ must be connected. That implies $G[S \cup \{a,c\}] = G[B] - \{b,z\}$ is also connected, which contradicts our assumption that $z$ is a cut vertex of $G_b$.

        We have shown that at least one of $\{a,z\}$ and $\{c, z \}$ is in $E$, say $\{a,z\}$. Thus, without loss of generality, the case $u \in \{a,c\}$ applies, since $z$ must be a cut vertex of $G_a$ by our assumption that this holds for all adjacent pairs $u,z$ with $u\in\{a,b,c\}$ and $z\in B \setminus \{a,b,c\}$. We have thus reduced the case $u=b$ to $u \in \{a,c\}$, which we already handled.

    \end{itemize}
    One can now show that if this point is reached without having found an applicable rule, then \Cref{rule:specialtriplet} must be applicable to the graph. Let us collect all the properties we know to be true (under the assumption that we have not found an applicable rule until now). \begin{enumerate}
    \item\label{prop1:rulecanbeapplied} $E(\{a\},B\setminus\{a,b,c\})\neq\emptyset$ and $E(\{c\},B\setminus\{a,b,c\})\neq\emptyset$.\item\label{prop2:rulecanbeapplied} $w(a,b) = w(b,c)$ \item\label{prop3:rulecanbeapplied} $2w(a,b) \leq \min (B- \{a,b,c\}, \{a,b,c\})$\item\label{prop4:rulecanbeapplied} For every pair of vertices $z \in B \setminus \{a,b,c\}$ and $u \in \{a,b,c\}$ with $\{z,u\} \in E$, $z$ is a cut vertex of both $G_u$ and  $G_{abc}$.
    \end{enumerate}

    Observe that since $B$ is biconnected containing at most one cut vertex of $G$, it follows that there can be at most one cut vertex of $G$ with a neighbor in $\{a,b,c\}$. We will now use the following claim that we will prove later.

    \begin{claim}\label{lemma:cut_vertices}
        Let $G$ be a connected graph with $X \subset G$ where $X$ and $G - X$ are connected, and for every vertex $v \in V(G - X)$, if $v$ has a neighbor in $X$, then $v$ is a cut vertex of $G - X$. If $|N(X)| \geq 2$, then there are two distinct vertices $v_1,v_2 \in N(X)$ that are both cut vertices of $G$.
    \end{claim}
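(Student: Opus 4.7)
The plan is to reason about the block-cut tree $T$ of the graph $G-X$. By hypothesis every vertex of $N(X)$ is a cut vertex of $G-X$, so $N(X)$ corresponds to a set of cut-vertex-nodes of $T$. I would first observe that $G-X$ cannot be a single biconnected component: if it were, $T$ would consist of a single block-node and $G-X$ would have no cut vertices at all, contradicting $|N(X)| \geq 2$. In particular, $T$ really does have internal cut-vertex-nodes to work with.

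Next, I would consider the minimal subtree $T'$ of $T$ (a Steiner tree in $T$) containing every cut-vertex-node corresponding to a vertex of $N(X)$. By minimality no leaf of $T'$ can be a block-node, because such a leaf could be pruned without losing any element of $N(X)$; hence every leaf of $T'$ is a cut-vertex-node belonging to $N(X)$. Since $|N(X)| \geq 2$, the tree $T'$ has at least two distinct leaves, yielding two candidates $v_1, v_2 \in N(X)$.

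It then remains to argue that each such leaf $v \in \{v_1,v_2\}$ is actually a cut vertex of $G$, not merely of $G-X$. Because $v$ is a cut vertex of $G-X$ it is incident to at least two blocks of $G-X$, so $T-v$ decomposes into at least two subtrees. As $v$ is a leaf of $T'$, exactly one of these subtrees contains $T'-v$ and therefore all other elements of $N(X)$; every other subtree is outside $T'$ and so contains no element of $N(X)$. The component $C$ of $(G-X)-v$ corresponding to such an $N(X)$-free subtree then has no vertex adjacent to $X$, so in $G-v$ the set $C$ is disconnected from $X$ (which itself remains connected by assumption). Hence $v$ is a cut vertex of $G$, and $v_1, v_2$ are the two vertices we seek.

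The main obstacle is the careful bookkeeping that links three different objects: the cut-vertex-nodes of $T$ that lie in $N(X)$, the subtrees of $T-v$ for a cut-vertex-node $v$, and the components of $(G-X)-v$. In particular, one has to verify precisely that the minimal Steiner subtree $T'$ has all its leaves in $N(X)$, and that an $N(X)$-free subtree of $T-v$ really does correspond to a component of $(G-X)-v$ with no edge to $X$. Once this correspondence is pinned down, the conclusion follows immediately from the hypothesis that every vertex of $G-X$ adjacent to $X$ must lie in $N(X)$.
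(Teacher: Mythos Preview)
Your proposal is correct. Both your argument and the paper's hinge on the block-cut tree of $G-X$, but the two witnesses are selected differently. The paper deletes all leaf blocks from the block-cut tree $H$, takes two leaves $\ell_1,\ell_2$ of the remaining tree (necessarily cut-vertex nodes), and shows that each $\ell_i$ separates an adjacent leaf block---whose non-cut vertices cannot lie in $N(X)$---from $X$, so $\ell_i$ is a cut vertex of $G$. You instead take the Steiner subtree $T'$ of $H$ on the terminal set $N(X)$ and use its leaves. The payoff of your route is that leaves of a Steiner tree are always terminals, so $v_1,v_2\in N(X)$ comes for free and matches the claim verbatim; the paper's construction does not by itself force $\ell_i\in N(X)$ (e.g., if $G-X$ is a path $u_1\cdots u_7$ with $N(X)=\{u_3,u_5\}$, the paper's procedure returns $\ell_1=u_2$ and $\ell_2=u_6$). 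Your argument therefore establishes the claim exactly as stated, whereas the paper's proof, read literally, yields only the weaker conclusion that two cut vertices of $G$ exist among the cut vertices of $G-X$.
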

    
    We apply \Cref{lemma:cut_vertices} on the set $X:=\{a,b,c\}$. By property~\ref{prop4:rulecanbeapplied} above, and by the fact that $N(\{a,b,c\})$ contains at most one cut vertex of $G$, we get that $|N(\{a,b,c\})| = 1$. The vertex in $N(\{a,b,c\})$ must be the cut vertex $v$. By property~\ref{prop1:rulecanbeapplied} we know that $\{a,v\}, \{c,v\} \in E$. By properties~\ref{prop2:rulecanbeapplied} and~\ref{prop3:rulecanbeapplied} all the weight restrictions of \Cref{rule:specialtriplet} are satisfied, which can thus be applied.
\end{proof}

\begin{proof}[Proof of \Cref{lemma:cut_vertices}] 
Let $H$ be the block-cut forest of $G-X$ and suppose $V(H) = \mathcal{C} \cup \mathcal{B}$, where $\mathcal{C}$ are the cut vertices of $G-X$ and $\mathcal{B}$ are the biconnected components of $G-X$. Since $|N(X)| \geq 2$, we get that $|\mathcal{C}| \geq 2$. Note that all leaves of $H$ are in $\mathcal{B}$. Consider the tree $T$ that we obtain by removing all leaves of $H$, and note that $T$ has at least two vertices since $\mathcal{C} \subseteq V(T)$. Thus, $T$ has at least two leaves, say $\ell_1, \ell_2$, each of which must be in $\mathcal{C}$, since its neighbors in $H\setminus T$ are in $\mathcal{B}$. Let $B' \in \mathcal{B}$ be a leaf of $H$ that is a neighbor of $\ell_i$ for some $i\in\{1,2\}$. Since every vertex in $N(X)$ is in $\mathcal{C}$, it follows that $E(X, B') = \varnothing$, so $\ell_i$ is a cut vertex in $G$.
\end{proof}

For this section, it only remains to prove \Cref{lem:GminusSuniformCF}.

\begin{proof}[Proof of \Cref{lem:GminusSuniformCF}]
    Let $G_1,G_2,\ldots,G_q$ be the sequence of graphs obtained while exhaustively applying rules 1--8 to $G_1$ ($G_2$ is the graph obtained after applying one rule to $G_1$, $G_3$ is the graph obtained after applying one rule to $G_2$, and so on). We prove that for any graph $G_i$ in the sequence, $G_i-S$ is a uniform-clique-forest. We run this proof by induction over the sequence of graphs in reverse order (in the order $G_q,G_{q-1}$,\ldots,$G_2$,$G_1$).
\\\textit{Base Case:} By \Cref{lem:rulecanbeapplied}, we know that $G_q$ is a graph without edges, therefore $G_q = G_q - S$ is trivially a uniform-clique-forest.
\\\textit{Induction Hypothesis: } Assume $G_i - S$ is a uniform-clique-forest.
\\\textit{Step Case:} We prove that $G_{i-1} - S$ is a uniform-clique-forest. We know that one rule among rules 1--8 was applied to $G_{i-1}$ to obtain $G_i$. We do a case distinction over which rule was applied:
\begin{itemize}
    \item \Cref{rule:edge}, \ref{rule:triplet}, or \ref{rule:specialtriplet} was applied to $G_{i-1}$. Every vertex these rules remove is also marked, therefore $G_{i-1} - S = G_i - S$.
    \item \Cref{rule:vertexclique_constant} was applied to $G_{i-1}$. We can create $G_{i-1}-S$ from $G_i-S$ by connecting a clique $X$ to a vertex $v \in V(G_i)$ such that $X \cup \{v\}$ is a uniform clique. If $v$ is in $S$, this is instead adding a disjoint uniform clique. Observe that this just adds a uniform leaf-clique in either case.
    \item \Cref{rule:vertexclique},  \ref{rule:vertexmark}, \ref{rule:specialvertexmark}, or \ref{rule:specialxy} was applied to $G_{i-1}$. We can create $G_{i-1}-S$ from $G_i-S$ by adding a disjoint uniform clique.
\end{itemize}
We conclude that in all cases $G_{i-1} - S$ consists of one or zero uniform cliques added to $G_i - S$ as a leaf, and thus by the induction hypothesis $G_{i-1}-S$ is a uniform-clique-forest.
\end{proof}

\section{Solving \maxcutwithweightsproblem on Uniform-Clique-Forests}\label{sec:onuniform}

\begin{lemma}\label{lem:onuniformCF}
\maxcutwithweightsproblem on a uniform-clique-forest $G$ with $n$ vertices and $m$ edges can be solved in $O(n + m)$ time. 
\end{lemma}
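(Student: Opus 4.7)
The plan is to use dynamic programming on the block-cut forest $F$ of $G$, which can be constructed in $O(n+m)$ time. We root each tree of $F$ at one of its blocks and process blocks in post-order. For a non-root block $B$, let $v_B$ denote the cut vertex connecting $B$ to its parent in the rooted tree and let $X_B := V(B)\setminus\{v_B\}$. The DP maintains, for every vertex $c$ of $G$, ``effective'' vertex weights $\tilde{w}_0(c), \tilde{w}_1(c)$ that already account for the optimal contribution of the subtree of $F$ hanging below $c$; initially $\tilde{w}_\sigma(v)=w_\sigma(v)$ for every $v$ and $\sigma\in\{0,1\}$.

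The key subroutine processes a single block $B$, which is a $c$-uniform clique on $s:=|V(B)|$ vertices, and computes $g_B(\sigma)$ for $\sigma\in\{0,1\}$: the optimal contribution of $B$ (excluding $v_B$'s own effective vertex weight) assuming $v_B$ is placed on side $\sigma$. The crucial observation enabled by uniformity is that if exactly $i$ vertices of $X_B$ are placed on side $1$, then the total weight of cut edges inside $B$ is $c\cdot(i+\sigma)(s-i-\sigma)$, \emph{independent} of which $i$ vertices are chosen. Hence, for each fixed $i$, the best subset $A\subseteq X_B$ with $|A|=i$ is the one maximizing $\sum_{u\in A}\bigl(\tilde{w}_1(u)-\tilde{w}_0(u)\bigr)$, obtained by taking the $i$ vertices with largest $\delta_u := \tilde{w}_1(u)-\tilde{w}_0(u)$. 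Setting $W_0 := \sum_{u \in X_B}\tilde{w}_0(u)$ and letting $D_i$ be the sum of the top $i$ values of $\delta_u$, we get
\[
g_B(\sigma) \;=\; W_0 \;+\; \max_{0 \le i \le s-1}\bigl[\,c\cdot(i+\sigma)(s-i-\sigma)+D_i\,\bigr].
\]
Sorting $X_B$ by $\delta_u$ once and scanning all values of $i$ evaluates this in $O(s_B\log s_B)$ time. After computing $g_B(0)$ and $g_B(1)$, we update $v_B$'s effective weights by $\tilde{w}_\sigma(v_B) \leftarrow \tilde{w}_\sigma(v_B) + g_B(\sigma)$, after which $B$ can be discarded from the tree.

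Because blocks are handled bottom-up, whenever we process $B$, the effective weights of every $u\in X_B$ already encode the optimum of the subtree hanging off $u$. For the root block $R$ (no top cut vertex), the same computation applies with $X_R = V(R)$, no $\sigma$-offset, and $i$ ranging over $\{0,\ldots,|V(R)|\}$; this yields the answer for the component containing $R$. Summing across connected components of $G$ produces the final answer to \maxcutwithweightsproblem.

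For the runtime, computing the block-cut forest and traversing it costs $O(n+m)$. The sorting work per block is $O(s_B\log s_B)$; since $s_B\log s_B \le s_B(s_B-1)$ for $s_B\ge 2$ and vanishes for $s_B=1$, we get $\sum_B s_B\log s_B \le \sum_B s_B(s_B-1)=2m$. Thus the total runtime is $O(n+m)$. The main point to verify carefully is correctness of the DP: (i) uniformity of each clique decouples the edge-weight contribution from the \emph{identity} of the vertices placed on a given side (only their $\delta$-values matter), so the ``sort-by-$\delta_u$'' reduction is valid for every $i$; and (ii) propagating optima through effective vertex weights along the block-cut tree is sound because the subtrees hanging off distinct vertices of $B$ interact with the rest of the graph only through those vertices' sides, so their contributions combine additively once the side of $v_B$ is fixed.
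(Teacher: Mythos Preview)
Your proposal is correct and follows essentially the same approach as the paper: both process the block-cut forest bottom-up (the paper phrases it as repeatedly peeling leaf-blocks), sort the non-cut vertices of each uniform clique by $\delta_u=\tilde w_1(u)-\tilde w_0(u)$, scan all split sizes to find the optimum for each assignment of the cut vertex, and propagate the result by updating the cut vertex's two weights. Your runtime accounting via $\sum_B s_B\log s_B \le \sum_B s_B(s_B-1)=2m$ is a slightly cleaner version of the paper's bound, which charges $O(n'\log n')\le m'$ for blocks with $n'\ge 4$ and $O(1)$ for smaller ones.
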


\begin{proof}
This proof loosely follows the proof of \cite[Lemma 4]{blackboxFPTlinear}. We first compute the cut-block forest of $G$. We know that every graph contains at least one leaf-block. Let $X \cup \{v\}$ be a leaf-block of $G$ where $v \in V(G)$ is the cut vertex of $X$ (if a connected component of $G$ consists of a single biconnected component $B$,  then $X = B - \{v\}$ where $v$ is an arbitrary vertex in $B$). Let $n' = |X|$ and $m'$ be the number of edges in $G[X \cup \{v\}]$. Since $G$ is a uniform-clique-forest, we know that $G[X\cup \{v\}]$ is $c$-uniform for some $c$. We now consider the maximum weighted cut in $G[X\cup\{v\}]$ for both possible cases $v\not\in C$ and $v\in C$.

We first consider $v\not\in C$. Let $\delta(x) = w_1(x) - w_0(x)$ for every vertex $x \in X$. We can sort the vertices in $X$ in the order $x_1,x_2,..,x_{n'}$ with decreasing $\delta$-value, i.e., $\delta(x_1)\geq \ldots\geq \delta(x_{n'})$. For any $p\in \{0,\ldots,n'\}$, we let $A_p$ be the set $\{x_1,\ldots,x_p\}$. Clearly $A_p$ is the best cut among all cuts $C'$ with $|C' \cap X|=p$. Now we can find the maximum weighted cut in $X\cup\{v\}$ by comparing the $n'+1$ cuts $A_0,..,A_{n'}$. Letting $\lambda$ be the value of this cut, we update $w_0(v) = \lambda$.

We can perform the same process for $v\in C$. We instead consider $A_p=\{v,x_1,\ldots,x_p\}$, and update $w_1(v)$ to the optimum value found. After having updated both weights for $v$, we can now delete all vertices in $X$.

We can apply this method to $G$ exhaustively until we are left with a graph with no edges. The desired value of the maximum weighted cut on the entire graph $G$ is the sum of the greater values of $w_0(v)$ or $w_1(v)$ for all remaining vertices $v$.

We now calculate the runtime of this method applied to one leaf-block $X$. Sorting the vertices takes $O(n'\log(n'))$ time. Since $X$ is a clique, we have $n'\log(n')\leq \frac{n'(n'+1)}{2}=m'$ for all $n'\geq 4$. We can calculate the value of the assignment $A_0$ in $O(m')$ time. Observe that the difference between cuts $A_i$ and $A_{i+1}$ for any $i \in \{0,..,n-1\}$ is in only one vertex. By only considering these local modifications we can calculate the values of the cuts $A_0,..,A_{n'}$ in $O(m')$ time. Since in every iteration we perform this process on a different block, in total we can bound our runtime with $O(n + m)$, since for blocks with $n'<4$ the runtime of $O(n'\log(n'))=O(1)$  can be charged to some vertex in the block, while for blocks with $n'\geq 4$ the runtime of $O(n'\log(n'))$ can be expressed as $O(m')$.
\end{proof}

\section{Conclusion}

With \Cref{lem:reduce,lem:onuniformCF}, our main result now follows easily:

\begin{proof}[Proof of \Cref{thm:main}]
    Given any instance $\maxcutproblem(G,\frac{w(G)}{2}+\frac{\mst(G)}{4}+\frac{k'}{4})$ with $k':=4k$, by \Cref{lem:reduce} we can in time $O(n+k\cdot m)$ either decide that the instance is a ``yes''-instance, or find a set $S\subseteq V$ with $|S|\leq 3k'=12k$ such that $G-S$ is a uniform-clique-forest. For each subset $S'\subseteq S$ we can then in time $O(n+m)$ build a \maxcutwithweightsproblem instance on the graph $G-S$, such that the vertex weights $w_0(v)$ and $w_1(v)$ of a vertex $v\in G-S$ denote the sum of the weights of edges to vertices in $S'$ and $S\setminus S'$ respectively. By \Cref{lem:onuniformCF}, each of these instances can be solved in $O(n+m)$ time. The maximum cut found in any instance given by a set $S'$ corresponds to the maximum cut $C$ of $G$ obtainable under the condition that $C\cap S=S'$. Taking into account the edges between $S$ and $S'$ and taking the maximum over all instances thus computes the maximum cut size of $G$.

    To compute the overall runtime, note that since $|S|\leq 12k$, we solve at most $2^{12k}$ \maxcutwithweightsproblem instances. Thus, the overall runtime is $O(n+k\cdot m+2^{O(k)}\cdot (n+m))=O(2^{O(k)}\cdot (n+m))$.
\end{proof}

If we want to find a cut instead of deciding the existence of a cut, we can use very similar techniques.

\begin{proof}[Proof of \Cref{thm:finding}]
    The proof of \Cref{lem:rulesaresound} is constructive: given a cut $C'$ on the reduced graph $G'$ of the assumed size, a cut $C$ on the original graph $G$ of the required size can be found in linear time in the number of removed edges and vertices. Thus, instead of applying reduction rules only until $k\leq 0$ or until the graph has no edges, we \emph{always} apply rules until the graph contains no edges. This requires at most $O(n\cdot m)$ time. Note that when we have removed all edges from the graph, the required size of a cut ($k$ larger than the \PT bound) is simply $\frac{0}{2}+\frac{0}{4}+k=k$. Thus, if $k\leq 0$ is reached, the required cut size is non-positive, thus we can start with any arbitrary cut $C'$ of the remaining independent set. We can then apply the cut extensions from the proof of \Cref{lem:rulesaresound} for all applied rules in reverse. This yields a cut of $G$ of the desired size. If otherwise we have $k>0$ when we reached a graph with no edges, we know that $|S|\leq 12k$, and we can again solve $2^{|S|}$ instances of \maxcutwithweightsproblem on $G-S$.
\end{proof}

\subsection{Open Problems}
Our result leaves a few interesting open problems. 

\proofsubparagraph{Other $\lambda$-extendible properties.}
In \cite{ptBound}, Poljak and \turzik actually not only show the lower bound for \maxcutproblem (\Cref{lem:PTbound}) but in fact they prove a very similar bound for the existence of large subgraphs fulfilling any so-called \emph{$\lambda$-extendible} property.\footnote{For \maxcutproblem this property would be bipartiteness.} Mnich, Philip, Saurabh, and Suchý~\cite{MNICH20141384} generalize the approach of \cite{blackboxFPT} for \maxcutproblem to work for a large subset of these $\lambda$-extendible properties. Note that while the title of \cite{MNICH20141384} includes ``above the \PT bound'', the authors restrict their attention to unweighted simple graphs, and thus their result applied to \maxcutproblem only implies the result of \cite{blackboxFPT}, but \emph{not} our result. We find it a very interesting direction to see if our result can be extended to also cover some more $\lambda$-extendible properties in multigraphs or positive integer-weighted graphs.

\proofsubparagraph{Kernelization.} Many previous works on \maxcutproblem parameterized above guaranteed lower bounds have also provided kernelization results~\cite{blackboxFPT,blackboxFPTlinear,aboveSpanningTree}. In particular, together with their linear-time algorithm parameterized by the distance $k$ to the \PT bound, Etscheid and Mnich~\cite{blackboxFPTlinear} also provide a linear-sized (in $k$) kernel. We are not aware of any kernelization results for \maxcutproblem on multigraphs or positive integer-weighted graphs. It would thus be very interesting to explore whether these results can also be extended to our setting.

\proofsubparagraph{FPT above better lower bounds.} Recently, Gutin and Yeo~\cite{newlowerbounds} proved new lower bounds for $\mac(G)$ for positive real-weighted graphs. In particular, they prove $\mac(G)\geq \frac{w(G)}{2}+\frac{w(M)}{2}$ where $M$ is a maximum matching of $G$, and $\mac(G)\geq \frac{w(G)}{2}+\frac{w(D)}{4}$ for any DFS-tree $D$ (which implies the \PT bound). Both of these bounds are consequences of a more general bound involving disjoint bipartite induced subgraphs, but the value of this bound is \NP-hard to compute~\cite{newlowerbounds}. The weight of the largest DFS-tree is also \NP-hard to compute~\cite{newlowerbounds}. These two bounds are thus not very suitable for an FPT algorithm, but the bound involving the maximum matching may be, since the maximum matching in a weighted graph can be computed in polynomial time using Edmonds' blossom algorithm.

\proofsubparagraph{General weights.} After going from simple graphs to multigraphs and thus positive integer-weighted graphs, it would be interesting to further generalize to positive real-weighted graphs. Here, it is not directly clear what the parameter $k$ exactly should be. Generalizing our algorithm may require completely new approaches since we cannot discretize the decrease of~$k$.

\clearpage
\bibliography{literature}

\newpage
\appendix

\section{Proof of \texorpdfstring{\Cref{lem:rulesaresound}}{Lemma Soundness}}\label{app:proofsoundness}
 We will often use the following claim that slightly strengthens the \PT bound in certain cases:

\begin{claim}\label{fact:neighboringedges}
    Let $G=(V,E)$ be a weighted graph with weights $w:E\rightarrow \N$ such that there exist edges $\{u,v\},\{v,x\}$ with $w(u,v) > w(v,x)$ and $G - \{u,v\}$ is connected. Then $G$ has a cut of size at least $\frac{w(G)}{2} + \frac{\mst(G)}{4} + \frac{1}{4}$ .
\end{claim}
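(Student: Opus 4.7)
The plan is to prove the claim by induction on $|V(G)|$, with base case $V(G) = \{u,v,x\}$, where $G$ is a three-vertex path and the bound follows by direct computation since $\mu(G) = w(G) = \mst(G)$ and $w(G) \geq 3$. The inductive step splits according to the leaf structure of $G$.

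In the easy subcases, if $u$ is a leaf then $\{u,v\}$ is a bridge and hence in every MSF of $G$, so $\mst(G) = \mst(G-u) + w(u,v)$ and $w(G) = w(G-u) + w(u,v)$. Applying the \PT bound to $G-u$ and placing $u$ opposite $v$ gains $w(u,v)$, collapsing algebraically to $\mu(G) \geq \tfrac{w(G)}{2} + \tfrac{\mst(G)}{4} + \tfrac{w(u,v)}{4}$; since $w(u,v) \geq 2$ (integer, strictly greater than $w(v,x) \geq 1$), this beats the target. The case where $x$ is a leaf is symmetric, yielding the extra $\tfrac{w(v,x)}{4} \geq \tfrac{1}{4}$. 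If some $y \notin \{u,v,x\}$ is a leaf, the claim's preconditions are preserved in $G-y$, so the inductive hypothesis applies; placing $y$ opposite its unique neighbor recovers the required cut size.

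The main case is when $G$ has no leaves, so every vertex has degree at least $2$ and $\{u,v\}$ is not a bridge. Set $G_1 := G$ with the edge $\{u,v\}$ removed; then $G_1$ is connected and $\mst(G_1) \geq \mst(G)$, with $\mst(G_1) \geq \mst(G) + 1$ precisely when $\{u,v\}$ lies in every MSF of $G$ (using that edge weights are integers). The \PT bound gives $\mu(G_1) \geq \tfrac{w(G) - w(u,v)}{2} + \tfrac{\mst(G_1)}{4}$. Take an optimal cut $C^{*}$ of $G_1$. If $C^{*}$ separates $u$ and $v$, then extending $C^{*}$ to $G$ gains the full $w(u,v) \geq 2$, yielding extra $\tfrac{w(u,v)}{2} \geq 1$ well above $\tfrac{1}{4}$. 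Otherwise we compare $C^{*}$ against its single-vertex flips $C^{*} \oplus \{u\}$ and $C^{*} \oplus \{v\}$: by $G_1$-optimality neither flip improves the $G_1$-cut, but in $G$ each flip now cuts $\{u,v\}$ and adds $w(u,v)$. Combining this local analysis with the MSF gap when $\{u,v\} \in \mathrm{MSF}(G)$, and, when $\{u,v\} \notin \mathrm{MSF}(G)$, exploiting the strict integer gap $w(u,v) - w(v,x) \geq 1$ together with the fact that $\{v,x\}$ is a candidate replacement in the MSF cycle through $\{u,v\}$, produces a cut of $G$ of weight at least $\tfrac{w(G)}{2} + \tfrac{\mst(G)}{4} + \tfrac{1}{4}$.

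The main obstacle will be the last subcase—no leaves, $C^{*}$ keeps $u$ and $v$ on the same side, and $\{u,v\}$ is not forced into any MSF of $G$—where neither the $\mst$-term nor a direct cut extension supplies the needed slack. There one must exploit $\{v,x\}$ as a cheaper alternative in the unique MSF cycle through $\{u,v\}$ to extract the integer gap $\geq 1$, likely after a further case split on whether $\{v,x\} \in \mathrm{MSF}(G)$ and on where $x$ lies in $C^{*}$; converting this discrete gap into exactly the $\tfrac{1}{4}$ of guaranteed cut improvement, matching the $\tfrac{1}{4}$ factor of the \PT bound, is the technical heart of the argument.
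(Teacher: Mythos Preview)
Your approach is substantially more complicated than the paper's, and the case you yourself flag as ``the main obstacle'' is not actually resolved. In the no-leaves subcase where the optimal cut $C^*$ of $G_1$ does not separate $u$ and $v$, you propose to flip one of them. But optimality of $C^*$ in $G_1$ only says that flipping does not \emph{increase} the $G_1$-cut; it gives no lower bound on $w_{G_1}(C^* \triangle \{v\})$, which can be far below $\mu(G_1)$. The gain of $w(u,v)$ from the newly cut edge cannot compensate for an uncontrolled loss. Your appeal to ``the MSF cycle through $\{u,v\}$'' and to $\{v,x\}$ as a ``candidate replacement'' is hand-wavy (when $\{u,v\}\notin\mathrm{MSF}(G)$ there is no such cycle in the MSF to begin with) and does not produce the required $\tfrac14$. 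So the ``technical heart'' is not merely technical---it is a genuine gap. (A minor secondary issue: your base case assumes a three-vertex path, but a triangle on $\{u,v,x\}$ also satisfies the hypotheses.)

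The paper's proof avoids all of this in one stroke: remove the \emph{vertices} $u$ and $v$ (not just the edge), apply the \PT bound to $G'=G-\{u,v\}$ (which is connected by hypothesis), and then reinsert $u,v$ on opposite sides of the cut, choosing the orientation so that at least half of $w(\{u,v\},V')$ is cut. Since the edge $\{u,v\}$ is now always cut, the extension gains $w(u,v)+\tfrac12 w(\{u,v\},V')$; combined with $\mst(G)\le\mst(G')+w(u,v)+w(v,x)$ and the integer inequality $w(u,v)\ge w(v,x)+1$, this yields the bound in four lines. No induction, no leaf analysis, no case split on whether $\{u,v\}$ lies in an MSF.
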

\begin{proof}
    Let $G'$ = $G - \{u,v\}$. By the \PT bound we know we have a cut $C'$ of $G'$ of size at least $\frac{w(G')}{2} + \frac{\mst(G')}{4}$. We can extend this to a cut $C$ in $G$ by adding exactly one of $u$ and $v$. We choose the one such that at least half of the weight in $E(\{u,v\},V')$ goes over the cut. We have that $MSF(G') \cup \{u,v\} \cup\ \{v,x\}$ is a spanning forest of $G$, therefore $\mst(G') + w(u,v) + w(v,x) \geq \mst(G)$. The cut $C$ has weight at least
    \begin{align*}
        w(C)&\geq \frac{w(G')}{2} + \frac{\mst(G')}{4} + \frac{w(\{u,v\},V')}{2} + w(u,v)\\
        &= \frac{w(G)}{2} + \frac{\mst(G')}{4} + \frac{w(u,v)}{2}\\
        &\geq \frac{w(G)}{2} + \frac{\mst(G')}{4} + \frac{w(u,v)}{4} + \frac{w(v,x)+1}{4}\\
        &\geq \frac{w(G)}{2} + \frac{\mst(G)}{4} + \frac{1}{4}.\qedhere
    \end{align*}
\end{proof}

Let us now prove \Cref{lem:rulesaresound}.

\begin{proof}[Proof of \Cref{lem:rulesaresound}]
We first see that each rule preserves connectedness simply by their preconditions. Each rule either explicitly requires that the resulting graph is connected (\Cref{rule:edge,rule:triplet,rule:specialxy}), or removes a whole leaf-block of $G$, except for the cut vertex (\Cref{rule:vertexclique,rule:specialvertexmark,rule:vertexmark,rule:vertexclique_constant,rule:specialtriplet}). %

We now prove the required cut size implication for each rule independently.
We need to prove that if there exists a cut $C'$ in $G'$ that produces a cut of size $\frac{w(G')}{2} + \frac{\mst(G')}{4} + \frac{k'}{4}$, then this can be extended to a cut $C$ of $G$ of size $\frac{w(G)}{2} + \frac{\mst(G)}{4} + \frac{k}{4}$. We thus assume that such a cut $C'$ exists, and then extend it in such a way that $C\cap V' = C'$.
We perform a case distinction on the rule that we applied to $G$ to obtain $G'$. Recall that for all rules except \Cref{rule:vertexclique_constant}, $k'=k-1$.

\proofsubparagraph{\Cref{rule:edge}:} We extend $C'$ by putting $x$ and $y$ on different sides of the cut. Among the two possibilities, we choose the one such that at least half the weight in $E(\{x,y\},V')$ goes over the cut. We get a cut of size at least
\begin{align*}w(C')&\geq \frac{w(G')}{2} + \frac{\mst(G')}{4} + \frac{k'}{4} + w(x,y) + \frac{w(E(\{x,y\},V'))}{2} \\
&= \frac{w(G)}{2} + \frac{\mst(G')}{4} + \frac{k'}{4} + \frac{w(x,y)}{2}\\
&\geq \frac{w(G)}{2} + \frac{\mst(G')}{4} + \frac{k'}{4} + \frac{w(x,y)}{4} + \frac{w(y,z)+1}{4}
\end{align*}
We now see that $\mst(G)\leq \mst(G')+w(x,y)+w(y,z)$, and we thus get 
\[w(C')\geq \frac{w(G)}{2} + \frac{\mst(G)}{4}+\frac{k}{4}.\]

\proofsubparagraph{\Cref{rule:vertexclique_constant}:} We can assume without loss of generality that $v\in C'$. Let $n' := |X \cup \{v\}|$. Observe that the sum of the total weight in $G[X\cup\{v\}]$ is $c(\frac{n'(n'-1)}{2})$ for the integer $c$ such that all edges in $G[X\cup\{v\}]$ have weight $c$. 
If $n'$ is odd, $|X|$ is even, and we can add exactly half the vertices to $C$. This way we have a cut $C''$ in $G[X \cup \{v\}]$ of size at least 
\begin{align*}
    w(C'')&\geq c\Big(\frac{n'+1}{2}\Big)\Big(\frac{n'-1}{2}\Big)\\
    &= c\Big(\frac{n'(n'-1)}{4}\Big) + c\Big(\frac{n'-1}{4}\Big)\\
    &= \frac{w(G[X \cup \{v\}])}{2} + \frac{\mst(G[X \cup \{v\}])}{4}.
\end{align*}
If $n'$ is even, we add $\frac{n'}{2}$ of the vertices of $X$ to $C$, and leave $\frac{n'}{2}-1$ vertices out of $C$. In this case, we have a cut $C''$ in $G[X \cup \{v\}]$ of size at least
\begin{align*}
    w(C'')&\geq c\Big(\frac{n'}{2}\Big)\Big(\frac{n'}{2}\Big)\\
    &= c\Big(\frac{n'(n'-1)}{4}\Big) + c\Big(\frac{n'}{4}\Big)\\
    &\geq \frac{w(G[X \cup \{v\}])}{2} + \frac{\mst(G[X \cup \{v\}])}{4}.
\end{align*}
In either case, we can see that we can combine $C'$ and $C''$ to a cut $C$ in $G$ of size at least
\begin{align*}
    w(C)&\geq \frac{w(G[X \cup \{v\}])}{2} + \frac{\mst(G[X \cup \{v\}])}{4} + \frac{w(G')}{2} + \frac{\mst(G')}{4} + \frac{k'}{4}\\
    &= \frac{w(G)}{2} + \frac{\mst(G)}{4} + \frac{k}{4},
\end{align*}
where in the last equality we used that $k'=k$ for this rule.

\proofsubparagraph{\Cref{rule:vertexclique}:} Since $G[X\cup \{v\}]$ is not uniform, we can apply \Cref{fact:neighboringedges} to $G[X \cup \{v\}]$ to obtain a cut $C''$ in $G[X \cup \{v\}]$ of size at least $\frac{w(G[X \cup \{v\}])}{2} + \frac{\mst(G[X \cup \{v\}])}{4} + \frac{1}{4}$. We now assume without loss of generality that $v\in C''\Leftrightarrow v\in C'$, i.e., both $C'$ and $C''$ put $v$ on the same side of the cut. In this case we can combine $C'$ and $C''$ to a cut $C$ of size at least 
\begin{align*}
    w(C)&\geq \frac{w(G[X \cup \{v\}])}{2} + \frac{\mst(G[X \cup \{v\}])}{4} + \frac{1}{4} + \frac{w(G')}{2} + \frac{\mst(G')}{4} + \frac{k'}{4}\\
    &= \frac{w(G)}{2} + \frac{\mst(G)}{4} + \frac{k}{4}.
\end{align*}

\proofsubparagraph{\Cref{rule:vertexmark}:} We know that $v$ must be adjacent to more than $1$ and less than $|X|$ vertices of $X$. We first do a case distinction on whether $G[X \cup \{v\}]$ is uniform or not.

If $G[X \cup \{v\}]$ is not uniform, we use the same argument as for the previous rule. Let $y \in X$ be a vertex not adjacent to $v$. Observe that for any $x\in X$ such that $\{v,x\}\in E$, $G[X \cup \{v\} - \{v,x\}]$ and $G[X \cup \{v\} - \{x,y\}]$ are both connected. Since $G[X\cup\{v\}]$ is not uniform but $G[X]$ is, we can find such an $x$ such that either $w(x,y) > w(x,v)$ or $w(x,y) < w(x,v)$. Therefore, we can use \Cref{fact:neighboringedges} on $G[X \cup \{v\}]$. This gives us a cut $C''$ in $G[X\cup\{v\}]$ of size at least $\frac{w(G[X \cup \{v\}])}{2} + \frac{\mst(G[X \cup \{v\}])}{4} + \frac{1}{4}$. Combining this cut with $C'$, we get a cut $C$ in $G$ of size at least 
\begin{align*}
    w(C)&\geq \frac{w(G')}{2} + \frac{\mst(G')}{4} + \frac{k'}{4} + \frac{w(G[X \cup \{v\}])}{2} + \frac{\mst(G[X \cup \{v\}])}{4} + \frac{1}{4} \\
&\geq \frac{w(G)}{2} + \frac{\mst(G)}{4} + \frac{k}{4}.
\end{align*}

Otherwise $G[X \cup \{v\}]$ is $c$-uniform. Let $m'$ = $w(G[X\cup\{v\}])$ and $n'=|X|$. We can order the vertices in $X$ as $x_1, x_2, \ldots, x_{n'}$ such that $v$ is adjacent to exactly $x_1,\ldots,x_r$, but not $x_{r+1},\ldots,x_{n'}$. Assume without loss of generality that $v\in C'$. We add $v$ and all $x_i$ for $i >\lceil\frac{n'}{2}\rceil$ to a cut $C''$ of $G[X\cup\{v\}]$. This cut has size $s:=c(\lceil \frac{n'}{2} \rceil\cdot\lfloor \frac{n'}{2} \rfloor + min\{r, \lceil \frac{n'}{2} \rceil\})$. Note that $m' = c(\frac{n'(n'-1)}{2} + r)$, thus we can rephrase $s=\frac{m'}{2} + c(\frac{n'}{4} - \frac{n'^2}{4} + (\lceil \frac{n'}{2} \rceil)(\lfloor \frac{n'}{2} \rfloor) + min\{\frac{r}{2}, \lceil \frac{n'}{2} \rceil - \frac{r}{2}\})$.
\\ If $n'$ is even, $(\lceil \frac{n'}{2} \rceil)(\lfloor \frac{n'}{2} \rfloor) = \frac{n'^2}{4}$, and then $s\geq\frac{m'}{2} + c\frac{n'}{4} + \frac{c}{2} \geq \frac{m'}{2} + c\frac{n'}{4} + \frac{1}{4}$.
\\ If $n'$ is odd, $(\lceil \frac{n'}{2} \rceil)(\lfloor \frac{n'}{2} \rfloor) = \frac{(n'+1)(n'-1)}{4} = \frac{n'^2}{4} - \frac{1}{4}$, and then $s\geq \frac{m'}{2} + c\frac{n'}{4} + \frac{c}{2} - \frac{c}{4} \geq \frac{m'}{2} + c\frac{n'}{4} + \frac{1}{4}$, as well.
\\ In either case we can combine $C''$ on $G[X\cup\{v\}]$ and $C'$ on $G'$ to get a cut $C$ of $G$ of size at least 
\begin{align*}
    w(C)&\geq \frac{w(G')}{2} + \frac{\mst(G')}{4} + \frac{k'}{4} + \frac{m'}{2} + c\frac{n'}{4} + \frac{1}{4}\\
    &\geq \frac{w(G)}{2} + \frac{\mst(G)}{4} + \frac{k}{4},
\end{align*}
where we used that an MSF of $G'$ can be turned into a spanning forest of $G$ by adding $n'$ edges of weight $c$.

\proofsubparagraph{\Cref{rule:specialvertexmark}:} Let $X' = X - \{x,y\}$. If $w(x,v) > c$ or $w(y,v) > c$, since $G[X \cup \{v\} - \{v,x\}]$ and $G[X \cup \{v\} - \{v,y\}]$ are connected, we know by  \Cref{fact:neighboringedges} that $G[X \cup \{v\}]$ has a cut $C''$ of size at least $\frac{w(G[X \cup \{v\}])}{2} + \frac{\mst(G[X \cup \{v\}])}{4} + \frac{1}{4}$. Since $G'$ and $G[X\cup\{v\}]$ overlap in only one vertex $v$ we can w.l.o.g. assume that $v\in C''\Leftrightarrow v\in C'$, and we can combine $C''$ and $C'$ to a cut $C$ of $G$ of size at least 
\begin{align*}
    w(C)&\geq \frac{w(G')}{2} + \frac{\mst(G')}{4} + \frac{k'}{4} +\frac{w(G[X \cup \{v\}])}{2} + \frac{\mst(G[X \cup \{v\}])}{4} + \frac{1}{4} \\
    &\geq \frac{w(G)}{2} + \frac{\mst(G)}{4} + \frac{k}{4}.
\end{align*}

Thus, from now on we may assume $w(x,v) =w(y,v)= c$, i.e., the only edge in $G[X\cup\{v\}]$ that does not have weight $c$ is the edge $\{x,y\}$ of weight $>c$. For the remaining cases, we perform a case distinction over the size of $X'$. Without loss of generality we assume that $v\not\in C'$.
\begin{itemize}
    \item Case 1: $|X'|=1$. Let $u$ be the only vertex in $X'$. Observe $\mst(G') + w(x,v) + w(y,v) + w(u,x) \geq \mst(G)$.
\begin{itemize}
    \item Assume $w(x,y) > 2c$. We create $C$ by adding $x$ to $C'$. Then $C$ has size at least
    \begin{align*}
        w(C)&\geq \frac{w(G')}{2} + \frac{\mst(G')}{4} + \frac{k'}{4} + w(x,y) + w(x,v) + w(x,u)\\
        &\geq \frac{w(G)}{2} + \frac{\mst(G')}{4} + \frac{k'}{4} + \frac{w(x,y) + w(x,v)+w(x,u)-w(y,v)-w(y,u)}{2}\\
        & =\frac{w(G)}{2} + \frac{\mst(G')}{4} + \frac{k'}{4} + \frac{w(x,y)}{2}\\
        & >\frac{w(G)}{2} + \frac{\mst(G')}{4} + \frac{k'}{4} + \frac{2c}{2}\\
        &\geq \frac{w(G)}{2} + \frac{\mst(G')}{4} + \frac{k'}{4} + \frac{3c}{4} + \frac{1}{4}\\
        &\geq \frac{w(G)}{2} + \frac{\mst(G)}{4} + \frac{k}{4}.
    \end{align*}

    \item Assume $w(x,y) \leq 2c$. We create $C$ by adding $x$ and $y$ to $C'$. Then $C$ has size at least 
    \begin{align*}
        w(C)&\geq \frac{w(G')}{2} + \frac{\mst(G')}{4} + \frac{k'}{4} + w(y,v) + w(y,u) + w(x,v) + w(x,u) \\
        &\geq \frac{w(G')}{2} + \frac{\mst(G')}{4} + \frac{k'}{4} + 3c + \frac{w(x,y)}{2}\\
        &\geq \frac{w(G)}{2} + \frac{\mst(G')}{4} + \frac{k'}{4} + c\\
        &\geq \frac{w(G)}{2} + \frac{\mst(G)}{4} + \frac{k'}{4} + \frac{c}{4}\\
        &\geq \frac{w(G)}{2} + \frac{\mst(G)}{4} + \frac{k}{4}.
    \end{align*}
\end{itemize}
\item Case 2: $|X'|=:n'>1$.
Observe $w(G) = w(G') + c(\frac{n'(n'-1)}{2} + 2n'+2) + w(x,y)$ and $\mst(G') + c(n'+2) \geq \mst(G)$.
\begin{itemize}
    \item Assume $w(x,y) \geq 2c$. We start with a cut on $G[X']$ of size at least $\frac{w(G[X'])}{2} + \frac{\mst(G[X'])}{4} = \frac{w(G[X'])}{2} + c(\frac{n'-1}{4})$ as guaranteed by the \PT bound. Then we extend this to a cut on $G[X'\cup\{x,y\}]$ by adding exactly one of $x$ and $y$, choosing of the two possibilities the one that cuts at least half the weight in $E(X', \{x,y\})$. We combine this cut with the cut $C'$.
    
    The resulting cut $C$ has size at least 
    \begin{align*}
        w(C)&\geq \frac{w(G')}{2} + \frac{\mst(G')}{4} + \frac{k'}{4} + \frac{w(G[X'])}{2} + c\Big(\frac{n'-1}{4}\Big) + \frac{|E(X', \{x,y\})|}{2} + w(x,y) + c \\
        &\geq \frac{w(G)}{2} + \frac{\mst(G')}{4} + \frac{k'}{4} + c\Big(\frac{n'-1}{4}\Big)  + \frac{w(x,y)}{2} \\
        &\geq \frac{w(G)}{2} + \frac{\mst(G')}{4} + \frac{k'}{4} + c\Big(\frac{n'+3}{4}\Big)\\
        &\geq \frac{w(G)}{2} + \frac{\mst(G)}{4} + \frac{k'}{4} + \frac{c}{4}\\
        &\geq \frac{w(G)}{2} + \frac{\mst(G)}{4} + \frac{k}{4}.
    \end{align*}
    \item Assume $w(x,y) < 2c$. We add both $x$ and $y$ to the cut $C'$.
    \\If $n'$ is odd we add $\frac{n'-1}{2}$ vertices of $X'$ to $C'$. The resulting cut $C$ has size at least 
    \begin{align*}
        w(C)&\geq \frac{w(G')}{2} + \frac{\mst(G')}{4} + \frac{k'}{4} + c\Bigg(\Big(\frac{n'+3}{2}\Big)\Big(\frac{n'+1}{2}\Big)+2\Bigg)\\
        &\geq \frac{w(G')}{2} + \frac{\mst(G')}{4} + \frac{k'}{4} + c\Big(\frac{n'^2+4n'+3}{4}+1\Big) + \frac{w(x,y)+1}{2}\\
        &=\frac{w(G)}{2} + \frac{\mst(G')}{4} + \frac{k'}{4} + c\Big(\frac{n'-1}{4}+1\Big) + \frac{1}{2}\\
        &= \frac{w(G)}{2} + \frac{\mst(G')}{4} + \frac{k'}{4} + c\Big(\frac{n'+3}{4}\Big) + \frac{1}{2}\\
        &\geq \frac{w(G)}{2} + \frac{\mst(G)}{4} + \frac{k'}{4} + \frac{1}{2}\\
        &\geq \frac{w(G)}{2} + \frac{\mst(G)}{4} + \frac{k}{4}.
    \end{align*}  
    \\If $n'$ is even we add $\frac{n'}{2}-1$ vertices of $X'$ to 
$C'$. The resulting cut $C$ has size at least
    \begin{align*}
        w(C)&\geq \frac{w(G')}{2} + \frac{\mst(G')}{4} + \frac{k'}{4} + c\Bigg(\Big(\frac{n'+2}{2}\Big)\Big(\frac{n'+2}{2}\Big)+2\Bigg),
    \end{align*}
    which is strictly larger than in the $n'$ odd case.
\end{itemize}
\end{itemize}

\proofsubparagraph{\Cref{rule:triplet}:} Let $\min=\min (V \setminus \{a,b,c\},\{a,b,c\})$, and let $e_{\min}$ be an edge of weight $\min$ in $E(V\setminus\{a,b,c\},\{a,b,c\})$. Observe that $MSF(G')$ together with $e_{\min},\{a,b\}$, and $\{b,c\}$ forms a spanning forest of $G$. Therefore $\mst(G') + \min + w(a,b) + w(b,c) = \mst(G') + \min + 2w(a,b) \geq \mst(G)$.

We consider two subsets of $\{a,b,c\}$: $A_1=\{a,c\}$, and $A_2=\{b\}$. Considering these as cuts of $G$, both cuts cut the edges $\{a,b\}$ and $\{b,c\}$, and at least one of these cuts gets at least half of the total weight in $E(V(G'),\{a,b,c\})$. Enhancing $C'$ by that set, we therefore get a cut $C$ of $G$ of size at least 
\begin{align*}
    w(C)&\geq \frac{w(G')}{2} + \frac{\mst(G')}{4} + \frac{k'}{4} + \frac{w(V(G'),\{a,b,c\})}{2} + w(a,b) + w(b,c) \\
    &\geq \frac{w(G)}{2} + \frac{\mst(G')}{4} + \frac{k'}{4} + \frac{w(a,b)}{2} + \frac{w(b,c)}{2}\\
    &\geq \frac{w(G)}{2} + \frac{\mst(G')}{4} + \frac{k'}{4} + \frac{w(a,b)}{2} + \frac{\min+1}{4}\\
    &\geq \frac{w(G)}{2} + \frac{\mst(G)}{4} + \frac{k'}{4} + \frac{1}{4}\\
    &\geq \frac{w(G)}{2} + \frac{\mst(G)}{4} + \frac{k}{4}.
\end{align*}

\proofsubparagraph{\Cref{rule:specialtriplet}:} If $b$ is adjacent to $v$ we can augment $C'$ by adding $a,b,c$ to $C$ if and only if $v\not\in C'$. Observe $MSF(G') \cup \{a,v\} \cup \{b,v\} \cup \{c,v\}$ is a spanning forest of $G$. Also by the conditions of \Cref{rule:specialtriplet} we have $\frac{w(a,v)}{4} + \frac{w(b,v)}{4} \geq \frac{w(a,b)}{2} + \frac{w(a,b)}{2} = \frac{w(a,b)}{2} + \frac{w(b,c)}{2}$. We can thus analyze the cut $C$ to have size at least
\begin{align*}
    w(C)&\geq \frac{w(G')}{2} + \frac{\mst(G')}{4} + \frac{k'}{4} + w(a,v) + w(b,v) + w(c,v) \\
    &\geq \frac{w(G')}{2} + \frac{\mst(G')}{4} + \frac{k'}{4} + \frac{3w(a,v)}{4} + \frac{3w(b,v)}{4} + \frac{w(a,b)}{2} + \frac{w(b,c)}{2} + w(c,v)\\
    &\geq \frac{w(G)}{2} + \frac{\mst(G')}{4} + \frac{k'}{4} + \frac{w(a,v)}{4} + \frac{w(b,v)}{4} + \frac{w(c,v)}{2}\\
    &\geq \frac{w(G)}{2} + \frac{\mst(G)}{4} + \frac{k'}{4} + \frac{w(c,v)}{4}\\
    &\geq \frac{w(G)}{2} + \frac{\mst(G)}{4} + \frac{k}{4}.
\end{align*}
 
If $b$ is not adjacent to $v$, add $a,c$ to $C$ if and only if $v\not\in C'$, and we add $b$ to $C$ if and only if $v\in C'$. Thus the edges $\{a,b\},\{b,c\},\{a,v\},\{c,v\}$ are all cut. Note that $MSF(G')\cup\{c,v\}\cup\{a,b\}\cup\{b,c\}$ is a spanning forest of $G$. The cut $C$ has size at least
\begin{align*}
    w(C)&\geq \frac{w(G')}{2} + \frac{\mst(G')}{4} + \frac{k'}{4} + w(a,v) + w(a,b) + w(b,c) + w(c,v)\\
    &\geq \frac{w(G)}{2} + \frac{\mst(G)}{4} + \frac{k'}{4} + \frac{w(a,v)}{2} + \frac{w(c,v)}{4} + \frac{w(a,b)}{4} + \frac{w(b,c)}{4}\\
    &\geq \frac{w(G)}{2} + \frac{\mst(G)}{4} + \frac{k}{4}.     
\end{align*}

\proofsubparagraph{\Cref{rule:specialxy}:} Let $v$ be the only neighbor of $\{x,y\}$ in $Y$ and let $\overline{n} = |X|$. We first extend $C'$ to $C''$ by adding $x,y$ to $C''$ if and only if $v\not\in C'$. We then extend $C''$ to $C$ as follows.

Without loss of generality, assume $x,y\not\in C''$. We perform a case distinction on the parity of $\overline{n}$. Note that $w(G[X \cup \{x,y\}]) = c(\frac{\overline{n}(\overline{n}-1)}{2} + 2\overline{n})$.
\begin{itemize}
    \item If $\overline{n}$ is odd, we add $\frac{\overline{n}+1}{2}$ of the vertices in $X$ to $C$. In $G[X\cup\{x,y\}]$ this cuts in total a weight of 
    \[c\Bigg(\Big(\frac{\overline{n}+1}{2}\Big)\Big(\frac{\overline{n}-1}{2}\Big) + 2\frac{\overline{n}+1}{2}\Bigg)= c\Big(\frac{\overline{n}(\overline{n}-1)}{4} + \frac{\overline{n}-1}{4} + \overline{n} + 1\Big) = \frac{w(G[X \cup \{x,y\}])}{2} + c\Big(\frac{\overline{n}}{4} + \frac{3}{4}\Big).\]
    \item If $\overline{n}$ is even, we add $\frac{\overline{n}}{2} + 1$ vertices in $X$ to $C$. In $G[X\cup\{x,y\}]$ this cuts in total a weight of  \[c\Bigg(\Big(\frac{\overline{n}}{2} + 1\Big)\Big(\frac{\overline{n}}{2}-1\Big) + 2\Big(\frac{\overline{n}}{2}+1\Big)\Bigg) = c\Big(\frac{\overline{n}^2}{4} + \overline{n} + 1\Big) = c\Big(\frac{\overline{n}^2}{4} + \frac{3}{4}\overline{n} + \frac{\overline{n}}{4} + 1\Big) = \frac{w(G[X \cup \{x,y\}])}{2} + c\Big(\frac{\overline{n}}{4} + 1\Big).\]
\end{itemize}
In either case we thus have that $C$ cuts at least half of the weight in $G[X\cup\{x,y\}]$ plus $c(\frac{\overline{n}}{4}+\frac{3}{4})$.

Observe that $ \mst(G) \leq \mst(G') + c\overline{n} + w(x,v) + w(y,v)$. In total we can thus bound the size of the cut $C$ as
\begin{align*}
    w(C)&\geq \frac{w(G')}{2} + \frac{\mst(G')}{4} + \frac{k'}{4} + \frac{w(G[X \cup \{x,y\}])}{2} + c(\frac{\overline{n}}{4} + \frac{3}{4}) + w(x,v) + w(y,v)\\
    &= \frac{w(G)}{2} + \frac{\mst(G')}{4} + \frac{k'}{4} + c(\frac{\overline{n}}{4} + \frac{3}{4}) + \frac{w(x,v)}{2} + \frac{w(y,v)}{2}\\
    &\geq \frac{w(G)}{2} + \frac{\mst(G)}{4} + \frac{k}{4}.
\end{align*}

We conclude that for every rule, from a cut $C'$ of $G'$ of the guaranteed size we can build a cut $C$ of $G$ of the required size, and thus the lemma follows.
\end{proof}

\end{document}